\numberwithin{equation}{section}
\newtheorem{theorem}{Theorem}[section]
\newtheorem{lemma}[theorem]{Lemma}
\newtheorem{proposition}[theorem]{Proposition}
\theoremstyle{definition}
\newtheorem{definition}[theorem]{Definition}
\newtheorem{remark}[theorem]{Remark}
\newcommand\wind{\mathrm{wind}}
\newcommand\Pf{\mathrm{Pf}}
\newcommand\sign{\mathrm{sign}}
\newcommand\Corr[2]{\langle#2 \rangle_{#1}}
\newcommand\opsi{\psi^{\star}}
\newcommand\Cdiscr{\mathbb{C}^{\textstyle\diamond}}
\title[2D Ising model: correlations via boundary value problems]{2D Ising model: correlation functions at criticality via Riemann-type boundary value problems}
\author[Dmitry Chelkak]{Dmitry Chelkak\thanks{The author is grateful to the University of Geneva, NCCR SwissMAP of the Swiss NSF, and ERC AG COMPASP for hospitality and support during the academic year 2015/16.}}
\begin{document}

\begin{abstract} In this note we overview recent convergence results for correlations in the critical planar nearest-neighbor Ising model. We start with a short discussion of the combinatorics of the model and a definition of fermionic and spinor observables. After that, we illustrate our approach to spin correlations by a derivation of two classical explicit formulae in the infinite-volume limit. Then we describe the convergence results (as the mesh size tends to zero, in arbitrary planar domains) for fermionic correlators~\cite{chelkak-smirnov-12}, energy-density~\cite{hongler-smirnov-13} and spin expectations~\cite{chelkak-hongler-izyurov-15}. Finally, we discuss scaling limits of mixed correlators involving spins, disorders and fermions, and the classical fusion rules for them. \end{abstract}

\begin{classification} Primary 82B20; Secondary 30G25, 81T40
\end{classification}

\begin{keywords} Ising model, conformal invariance, spin correlations, discrete holomorphicity
\end{keywords}

\maketitle

\section{Introduction}

The main goal of this note is to give a survey of convergence results for correlation functions in the critical planar Ising model obtained during the last several years. The Ising model, which is the simplest lattice model of a ferromagnet, was proposed by Lenz in 1920 and is now considered to be an archetypical example of a statistical mechanics system that admits an order-disorder phase transition in dimensions two and above, and for which the appearance of the \emph{conformal symmetry} at criticality in dimension two can be rigorously understood in great detail. Certainly, everybody knows that ``2D Ising model is a free fermion'' though this statement may look a bit vague for the probabilistic community. More precisely, the partition function of the \emph{nearest-neighbor} Ising model on a planar graph~$G$ can be written~\cite{hurst-green-60} as the Pfaffian of some matrix (e.g., indexed by oriented edges of~$G$). This fact allows one to introduce so-called \emph{fermionic observables} as the Pfaffians of (small-size) minors of the inverse matrix and give a concrete meaning to the statement mentioned above: if one interprets these observables as formal correlators, the fermionic Wick rule for the multi-point ones is built-in. Such observables satisfy simple linear equations which (at criticality) can be interpreted as a \emph{discrete holomorphicity} property and can be equivalently defined in a purely combinatorial manner~\cite{smirnov-10icm}. Moreover, as was proposed by Smirnov in his seminal papers~\cite{smirnov-06icm, smirnov-10}, they can be thought of as solutions to discrete versions of some special Riemann-type boundary value problems in order to prove their convergence to conformal covariant limits.

\smallskip

Nevertheless, it is worth noting that the fermionic observables per se do not allow one to analyze the \emph{spin correlations}, which are presumably the most interesting quantities appearing in the Ising model. An appropriate tool to study them is \emph{spinor observables}~\cite{chelkak-izyurov-13,chelkak-hongler-izyurov-15}, which can be thought of as generalizations of the fermionic ones for the Ising model considered on an appropriate double-cover of~$G$ and constrained with the spin-flip symmetry between the sheets. A more systematic way to introduce them is provided by the famous spin-disorder formalism of Kadanoff and Ceva~\cite{kadanoff-ceva-71}. In this language, the fermionic variables are obtained by fusing (a part of) spins and disorders, and the relevant Pfaffian identities can be deduced from the combinatorial representations of their correlators.

\smallskip

We review the combinatorics of the 2D Ising model in Section~\ref{section:combinatorics}, following~\cite{chelkak-cimasoni-kassel-16}. Note that one can define the fermionic and spinor observables in the Ising model considered on an \emph{arbitrary} planar graph, as well as use them for the study of the model away of criticality. We illustrate our approach to the analysis of spin correlations in Section~\ref{section:explicit-formulae}. Namely, we give a self-contained derivation of two classical results about the (critical and subcritical) diagonal spin-spin correlations in the full-plane using a direct link with the theory of orthogonal polynomials provided by spinor observables; see~\cite{chelkak-hongler-16} for similar computations in the half-plane. Section~\ref{section:conformal-invariance} follows \cite{chelkak-smirnov-12, hongler-smirnov-13,chelkak-hongler-izyurov-15,chelkak-hongler-izyurov-16} and is devoted to the convergence and conformal covariance of the correlation functions at criticality. For simplicity, we consider the Ising model on \emph{square grid} approximations of a given planar domain~$\Omega$; in fact, a good portion of the results can be directly generalized to isoradial graphs. We assume that~$\Omega$ is simply connected and consider~``$+$'' boundary conditions only; see~\cite{chelkak-hongler-izyurov-16} for a general setup.
The presentation is organized so as to highlight the correspondence between discrete objects and the standard Conformal Field Theory language used to describe the continuum limit of the critical Ising model. In particular, the normalizing factors in discrete are adjusted so as to fit the ones in continuum.

\smallskip

It should be said that there are plenty of important topics on the 2D Ising model that we do not touch in this note. There are more involved methods to study spin correlations in the infinite-volume limit, notably a link with Painlev\'e equations developed in~\cite{wu-mccoy-tracy-barouch-76}, quadratic identities found in~\cite{mccoy-perk-wu-80,perk-80} and the exact bosonization approach suggested in~\cite{dubedat-bosonization-11}; see also the monographs~\cite{mccoy-wu-book-14} and~\cite{palmer-book-07}. At criticality, one might be interested in convergence results for lattice counterparts of other CFT fields (e.g., the stress-energy tensor~\cite{chelkak-glazman-smirnov-16}) and in a definition of the Virasoro algebra action on these lattice fields~\cite{hongler-kytola-viklund-16}. Also, we do not touch the conformal invariance of curves~\cite{5-authors-14,izyurov-15} and loop ensembles~\cite{kemppainen-smirnov-15,benoist-duminil-copin-hongler-14,benoist-hongler-16} arising in the critical model. Finally, an important progress has been achieved recently~\cite{giuliani-grennblatt-mastropietro-12} in the analysis of the finite-range 2D Ising model via rigorous renormalization techniques.

\medskip

\noindent {\bf Acknowledgements.} First, I wish to thank my co-authors Cl\'ement Hongler and Konstantin Izyurov, to whom many of the ideas discussed in this note belong. It was also a great pleasure to collaborate with David Cimasoni, Alexander Glazman and Adrien Kassel on \cite{chelkak-cimasoni-kassel-16} and \cite{chelkak-glazman-smirnov-16}. In addition, I would like to thank Hugo Duminil-Copin, Kalle Kyt\"ol\"a and Wendelin Werner for many useful discussions and support. Last but not least, I am greatly indebted to Stanislav Smirnov, who introduced me into this field ten years ago and with whom I had the privilege to work on \cite{chelkak-smirnov-12}.

\newpage

\section{Combinatorics of the nearest-neighbor Ising model in 2D}
\label{section:combinatorics}

\subsection{Definition and contour representations of the planar Ising model}
\label{subsect:contours}
 Let~$G$ be a finite connected \emph{planar} graph embedded into the plane so that all its edges are straight segments.
 %with vertex set~$V(G)$ and edges set~$E(G)$.
The (ferromagnetic) \emph{nearest-neighbor} Ising model on the graph \emph{dual} to~$G$ is a random assignment of spins~$\sigma_u\in\{\pm 1\}$ to the~\emph{faces} of~$G$ with the probabilities of spin configurations~$\sigma\!=\!(\sigma_u)$ proportional to
\[
\textstyle \mathbb{P}_G[\,\sigma\,]\propto\exp\,[\,\beta \sum_{u\sim w} J_e\sigma_u\sigma_w\,]\,,\quad e=(uw)^*,
\]
where the positive parameter~$\beta$ is called the \emph{inverse temperature}, the sum is taken over all pairs of adjacent faces~$u,w$ (equivalently, edges~$e$) of~$G$, and~$J=(J_e)$ is a given collection of positive~interaction constants indexed by the edges of~$G$.

\smallskip

The~\emph{domain walls representation} (aka low temperature expansion) of the model is a~$2$-to-$1$ correspondence between spin configurations and even subgraphs~$P$ of~$G$: given a spin configuration,~$P$ consists of all edges such that the two adjacent spins differ from each other. We will often consider a decomposition of~$P$ into a \emph{collection of non-intersecting and non-self-intersecting loops}, note that it is not unique in general. Below we will always assume that the spin of the outermost face of~$G$ is fixed to be~$+1$, which is often described as \emph{``$+$\!''~boundary conditions}. Then the above correspondence becomes a bijection and one can write
\begin{equation}
\label{eq:spin-spin-low-temperature}
\textstyle \mathbb{E}_G[\sigma_{u_1}...\sigma_{u_m}]\;=\; \mathcal{Z}_G^{-1}\sum_{P\in\mathcal{E}_G}x(P)(-1)^{\mathrm{loops}_{[u_1,..,u_m]}(P)}\,,
\end{equation}
where~$\mathcal{E}_G$ denotes the set of all even subgraphs of~$G$,
\[
\textstyle \mathcal{Z}_G=\sum_{P\in\mathcal{E}_G} x(P)\,,\qquad x(P):=\exp[-2\beta\sum_{e\in P} J_e]\,,
\]
and~$\mathrm{loops}_{[u_1,..,u_m]}(P)$ is the number of loops in~$P$ surrounding an odd number of faces~$u_1,...,u_m$. If the graph~$G$ is not trivalent, this number is not uniquely defined (as there can be several ways to decompose~$P\in\mathcal{E}_G$ into a collection of loops) but it is always well defined modulo~$2$. The quantity~$\mathcal{Z}_G$ is called the~\emph{partition function} of the model. It is convenient to introduce the following parametrization:
\[
\textstyle x(P)=\prod_{e\in P}x_e\,,\qquad x_e=\tan\frac{1}{2}\theta_e:=\exp[-2\beta J_e]\,,
\]
where~$x_e\in [0,1]$ and~$\theta_e:=2\arctan x_e\in [0,\frac{1}{2}\pi]$ have the same monotonicity as~$\beta^{-1}$.

\smallskip

There exists another classical way of representing spin correlations (first observed by van der Waerden~\cite{van-der-warden-41} and known as the \emph{high temperature expansion}):
%, first observed by van der Waerden~\cite{van-der-warden-41}:
for the Ising model with spins assigned to~\emph{vertices} of~$G$ and interaction constants~$J_e^*$\,, cancellations caused by the fact that all products of spins are~$\pm 1$ imply the equality
\begin{equation}
\label{eq:spin-spin-high-temperature}
\mathbb{E}_G^*[\sigma_{v_1}...\sigma_{v_{2n}}] \; = \; (\mathcal{Z}_G^*)^{-1}{\textstyle\sum_{P\in\mathcal{E}_G(v_1,...,v_{2n})}x^*(P)}\,,
\end{equation}
where~$\mathcal{E}_G(v_1,...,v_{2n})$ denotes the set of subgraphs of~$G$ such that each of~$v_1,...,v_{2n}$ has an odd degree in~$P$ while the degrees (in~$P$) of all other vertices are even,
\[
\textstyle \mathcal{Z}_G^*=\sum_{P\in\mathcal{E}_G}x^*(P)\quad \text{and}\quad x^*(P):=\prod_{e\in P}\tanh [\beta^* J_e^*]\,.
\]

\begin{remark}
\label{rem:kramers-wannier}
It is well known that the homogeneous (all~$J_e\!=\!1$) Ising model on the square grid exhibits a second order \emph{phase transition} at~$\beta_\mathrm{crit}=\frac{1}{2}\log(1+2^{\frac{1}{2}})$: in the infinite-volume limit, there exists a unique Gibbs measure above and at the critical temperature~$\beta_{\mathrm{crit}}^{-1}$, while the subcritical model has two extremal ones describing~``$+$'' and~``$-$'' phases, respectively (e.g., see~\cite{aizenmann-80}). The explicit value of~$\beta_{\mathrm{crit}}$ can be found from the Kramers--Wannier~\cite{kramers-wannier-41} self-duality condition: if we assume that $x=x^*:=\tanh \beta^*$ and use the same parametrization~$\tan\frac{1}{2}\theta^*=\exp[-2\beta^*]$ for the dual inverse temperature~$\beta^*$, then
\begin{equation}
\label{eq:kramers-wannier}
\tan \tfrac{1}{2}\theta~=~x~=~x^*~=~\tan \tfrac{1}{2}(\tfrac{\pi}{2}-\theta^*)\,,
\end{equation}
which gives~$\theta_{\mathrm{crit}}=\frac{\pi}{4}$ and~$x_{\mathrm{crit}}=2^{\frac{1}{2}}-1$. Though self-duality a priori does not imply criticality, there are several ways to see that the properties of spin-spin expectations are very different for~$\beta$ above and below $\beta_{\mathrm{crit}}$\,, thus justifying the phase transition. A proof based on the random-cluster representation of the Ising model can be found in~\cite{beffara-duminil-copin-12}. We will also see this in Section~\ref{section:explicit-formulae} when computing the so-called diagonal spin-spin expectations via orthogonal polynomials techniques.
\end{remark}

\subsection{Kac--Ward formula for the partition function}
\label{subsect:kac-ward}
Let~$E(G)$ be the set of \emph{oriented} edges of the graph~$G$ and, for~$e\in E(G)$, let~$\overline{e}$ denote the same edge with the opposite orientation. Further, let us define a matrix~$\mathrm{T}$ indexed by~$E(G)$ as
\[
\mathrm{T}_{e,e'}:=\begin{cases}(x_ex_{e'})^{\frac{1}{2}}\exp[\tfrac{i}{2}\wind(e,e')] & \text{if}~e'~\text{continues}~e; \\
0 & \text{otherwise},
\end{cases}
\]
where in the first line~$e'\ne \overline{e}$ starts at the endpoint of~$e$ and~$\wind(e,e')\in (-\pi,\pi)$ denotes the rotation angle from~$e$ to~$e'$. The famous \emph{Kac--Ward formula}~\cite{kac-ward-52} for the partition function of the Ising model states that
\begin{equation}
\label{kac-ward-formula}
\mathcal{Z}_G = [\det (\mathrm{Id}-\mathrm{T})]^\frac{1}{2}.
\end{equation}

It was an intricate story to give a fully rigorous proof of this identity for general planar graphs (with most of the standard textbooks presenting an incomplete derivation from~\cite{vdovichenko-65}), see~\cite{lis-16} for a streamlined version of classical arguments based on the straightforward expansion of the Kac--Ward determinant.
Another approach (going back to~\cite{hurst-green-60}, see~\cite[Sections~1.3 and~1.4]{chelkak-cimasoni-kassel-16} for historical comments) works as follows. Let~$\mathrm{J}_{e,e'}:=\delta_{\bar{e},e'}$ and~$\mathrm{K}:=\mathrm{J}\cdot(\mathrm{Id}\!-\!\mathrm{T})$, note that the matrix~$\mathrm{K}$ is self-adjoint. {For each~$e\in E(G)$, fix a square root of the direction of $e$ and let~$\eta_e$ be its complex conjugate multiplied by a fixed unimodular factor $\varsigma:=e^{i\frac{\pi}{4}}$. Let~$\mathrm{U}:=\mathrm{diag}(\eta_e)$.}
\begin{theorem}
\label{thm:kac-ward}
The matrix~$\widehat{\mathrm{K}}:={i\mathrm{U}^*\mathrm{K}\mathrm{U}}$ is real anti-symmetric and~$\mathcal{Z}_G\!=\pm \Pf[\,\widehat{\mathrm{K}}\,]$.
\end{theorem}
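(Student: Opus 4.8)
The plan is to verify the two assertions in turn: first that $\widehat{\mathrm{K}}$ is real anti-symmetric, and then that its Pfaffian recovers $\mathcal{Z}_G$ up to a sign via the Kac--Ward identity~\eqref{kac-ward-formula}. For the first assertion I would begin with a soft reduction. Since each $\eta_e$ is unimodular, $\mathrm{U}$ is unitary, and as $\mathrm{K}$ is self-adjoint one computes $\widehat{\mathrm{K}}^* = (i\mathrm{U}^*\mathrm{K}\mathrm{U})^* = -i\,\mathrm{U}^*\mathrm{K}^*\mathrm{U} = -\widehat{\mathrm{K}}$, so $\widehat{\mathrm{K}}$ is anti-Hermitian. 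An anti-Hermitian matrix is real anti-symmetric precisely when all of its entries are real; thus it suffices to check that every entry $\widehat{\mathrm{K}}_{e,e'} = i\,\overline{\eta}_e\,\mathrm{K}_{e,e'}\,\eta_{e'}$ is real.

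Next I would make the entries explicit. Writing $\mathrm{K}_{e,e'} = \delta_{\bar e,e'} - \mathrm{T}_{\bar e,e'}$ and denoting by $d_e$ the unit direction of $e$ (so that $\eta_e = \varsigma\,\overline{\sqrt{d_e}}$, whence the factor $\varsigma$ cancels in $\overline{\eta}_e\eta_{e'}=\sqrt{d_e}\,\overline{\sqrt{d_{e'}}}$ and the edge weights obey $x_{\bar e}=x_e$), there are only two families of nonzero entries. The ``identity'' entries occur at $e'=\bar e$, where $\widehat{\mathrm{K}}_{e,\bar e} = i\,\sqrt{d_e}\,\overline{\sqrt{d_{\bar e}}}$; the ``transport'' entries occur when $e'$ continues $\bar e$, i.e. $e,e'$ share their tail and $e'\ne e$, where $\widehat{\mathrm{K}}_{e,e'} = -i\,(x_ex_{e'})^{1/2}\,\sqrt{d_e}\,\overline{\sqrt{d_{e'}}}\exp[\tfrac{i}{2}\wind(\bar e,e')]$. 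The mechanism making both real is the same: the relevant product of half-directions and half-winding squares to $-1$. Indeed, from $d_{\bar e}=-d_e$ one gets $(\sqrt{d_e}\,\overline{\sqrt{d_{\bar e}}})^2 = d_e\,\overline{d_{\bar e}} = -1$, while reading $\exp[i\wind(\bar e,e')] = d_{e'}/d_{\bar e} = -d_{e'}\overline{d_e}$ off the definition of the rotation angle gives $(\sqrt{d_e}\,\overline{\sqrt{d_{e'}}}\exp[\tfrac i2\wind(\bar e,e')])^2 = d_e\overline{d_{e'}}\cdot(-d_{e'}\overline{d_e}) = -1$. In both cases the bracketed phase is therefore $\pm i$, so the prefactor $i$ (resp.\ $-i(x_ex_{e'})^{1/2}$) turns the entry into a real number $\mp 1$ (resp.\ $\pm(x_ex_{e'})^{1/2}$); the signs depend on the choices of square roots but not the reality. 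Together with anti-Hermiticity this shows $\widehat{\mathrm{K}}$ is real anti-symmetric.

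For the identity $\mathcal{Z}_G = \pm\Pf[\widehat{\mathrm{K}}]$ I would use that a real anti-symmetric matrix satisfies $\Pf[\widehat{\mathrm{K}}]^2 = \det\widehat{\mathrm{K}}$ and compare determinants. Writing $N:=|E(G)|$ (which is even), and using $\det(\mathrm{U}^*\mathrm{U})=1$ together with $\det\mathrm{J} = (-1)^{N/2}$ (as $\mathrm{J}$ is a product of $N/2$ transpositions $e\leftrightarrow\bar e$) and $i^N=(-1)^{N/2}$, one finds $\det\widehat{\mathrm{K}} = i^N\det(\mathrm{U}^*\mathrm{U})\,\det\mathrm{J}\cdot\det(\mathrm{Id}-\mathrm{T}) = \det(\mathrm{Id}-\mathrm{T})$. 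By the Kac--Ward formula~\eqref{kac-ward-formula} the right-hand side equals $\mathcal{Z}_G^2$, whence $\Pf[\widehat{\mathrm{K}}]^2 = \mathcal{Z}_G^2$ and $\mathcal{Z}_G = \pm\Pf[\widehat{\mathrm{K}}]$.

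The genuine work is concentrated in the second paragraph: the bookkeeping of the unimodular phases $\sqrt{d_e}$, $\varsigma$ and $\exp[\tfrac i2\wind(\bar e,e')]$. I expect the main obstacle to be handling the branch of the half-winding cleanly; the device of squaring each phase product, and reading $\exp[i\wind(\bar e,e')]=-d_{e'}\overline{d_e}$ directly off the definition of the rotation angle, sidesteps any explicit $2\pi$-ambiguity and reduces the reality of every entry to the single relation $d_{\bar e}=-d_e$. One should also record that $\wind(\bar e,e')\in(-\pi,\pi)$ is never degenerate here, since $e'\ne e$ share a tail and a straight-line embedding forbids two distinct edges from coinciding in direction.
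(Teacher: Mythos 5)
Your verification that $\widehat{\mathrm{K}}$ is real anti-symmetric is correct and is essentially the standard computation: anti-Hermiticity from the self-adjointness of $\mathrm{K}$, plus reality of the two families of entries via the cancellation of $\varsigma$ and the squaring trick that reduces everything to $d_{\bar e}=-d_e$. That part matches what one would write out in any treatment.

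The second half, however, takes a genuinely different route from the paper, and the difference matters logically. You deduce $\mathcal{Z}_G=\pm\Pf[\,\widehat{\mathrm{K}}\,]$ by computing $\det\widehat{\mathrm{K}}=\det(\mathrm{Id}-\mathrm{T})$ and then invoking the Kac--Ward formula~\eqref{kac-ward-formula} to identify this with $\mathcal{Z}_G^2$. The determinant bookkeeping ($i^N\det\mathrm{J}=(-1)^{N/2}(-1)^{N/2}=1$) is correct, and since the theorem only claims the identity up to sign, the argument is valid \emph{provided} \eqref{kac-ward-formula} is taken as an external input (e.g.\ from \cite{lis-16}). But the paper's proof (via \cite[Theorem~1.1]{chelkak-cimasoni-kassel-16}) goes in the opposite direction: it establishes the Pfaffian identity \emph{directly}, by a measure-preserving correspondence between even subgraphs $P\in\mathcal{E}_G$ and dimer configurations on the terminal graph $G^{\mathrm{K}}$, showing that all terms in the Pfaffian expansion of $\widehat{\mathrm{K}}$ carry the same sign; the remark following the theorem then states that Theorem~\ref{thm:kac-ward} \emph{implies}~\eqref{kac-ward-formula}. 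Your derivation inverts this dependency and would render that implication circular, so it cannot serve the role the theorem plays in the paper, namely as a self-contained proof of the Kac--Ward formula (whose rigorous justification for general planar graphs is precisely the delicate point the paper flags). It also yields strictly less information: the term-by-term combinatorial identification is what underlies the expansion of Pfaffian minors in Theorem~\ref{thm:multi-fermions}, and none of that is visible from the determinant comparison. In short: correct as a conditional argument, but you should either import the combinatorial dimer correspondence or explicitly acknowledge that you are assuming an independent proof of~\eqref{kac-ward-formula}.
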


\begin{remark}
The proof (e.g., see~\cite[Theorem~1.1]{chelkak-cimasoni-kassel-16}) is based on the measure-preserving correspondence between the configurations~$P\in\mathcal{E}_G$ and dimer configurations on some auxiliary non-planar graph~$G^{\mathrm{K}}$ called the~\emph{terminal graph}, whose vertices are in a bijection with~$E(G)$. Note that Theorem~\ref{thm:kac-ward} directly implies~(\ref{kac-ward-formula}). In fact, there exist many other ways to represent the 2D Ising model via dimers (notably a version~\cite{dubedat-bosonization-11} of the classical Fisher mapping onto the dimer model on a planar graph~$G^\mathrm{F}$ constructed from~$G$); see~\cite[Section~3.1]{chelkak-cimasoni-kassel-16} for further discussion.
\end{remark}

\subsection{Fermionic observables}
\label{subsect:fermions}
Given the real anti-symmetric matrix~$\widehat{\mathrm{K}}$, one can introduce Grassmann (i.e., anti-commuting) variables~$(\phi_e)_{e\in E(G)}$ and declare
\[
\Corr{\widehat{\mathrm{K}}}{\phi_{e_1}...\phi_{e_{2k}}}~:=~\mathcal{Z}_G^{-1}\cdot{\textstyle \int} \phi_{e_1}...\phi_{e_{2k}}\exp[-\tfrac{1}{2}\phi^\top\widehat{\mathrm{K}}\phi]d\phi ~=~ \Pf[\,\widehat{\mathrm{K}}^{-1}_{e_p,e_q}\,]_{p,q=1}^{2k}\,.
%= \Pf[\,\Corr{\widehat{\mathrm{K}}}{\phi_{e_p}\phi_{e_q}}\,]_{p,q=1}^{2k}\,.
\]
%see, e.g.,~\cite[Chapter 2.B]{} for the notation of the Grassmann variables calculus.
We need some notation to give a \emph{combinatorial interpretation} of these quantities. Let us add an auxiliary vertex~$z_e$ in the middle of each edge of~$G$ and assign the weight~$x_e^{\frac{1}{2}}$ to both of the half-edges emanating from~$z_e$, which we identify with~$e$ and~$\overline{e}$ according to their orientations. Given a collection~$\mathrm{E}=\{e_1,...,e_{2k}\}\subset E(G)$, let~$\mathcal{E}_G(e_1,...,e_{2k})$ denote the set of all subgraphs~$P$ of this new graph such that the degrees (in~$P$) of all vertices except~$z_{e_1},...,z_{e_{2k}}$ are even, and the following holds for each~$e\in \mathrm{E}$: if~$\overline{e}\not\in \mathrm{E}$, then the degree of~$z_e$ in~$P$ equals~$1$ and~$P$ contains the half-edge identified with~$e$; while if both~$e,\overline{e}\in \mathrm{E}$, then~$z_e$ has \emph{degree~$\mathit{0}$} in~$P$.

\begin{theorem}[see {\cite[Theorem~1.2]{chelkak-cimasoni-kassel-16}}]
\label{thm:multi-fermions} For each set $\{e_1,...,e_{2k}\}\subset E(G)$, one has
\begin{equation}
\label{eq:multi-phi}
\textstyle \Corr{\widehat{\mathrm{K}}}{\phi_{e_1}...\phi_{e_{2k}}} \; =\; \mathcal{Z}_G^{-1}\sum_{P\in\mathcal{E}_G(e_1,...,e_{2k})}x(P)\tau(P)\,,
\end{equation}
where~$x(P)$ denotes the product of all weights of edges and half-edges from~$P$. The sign~$\tau(P)=\pm 1$ is uniquely determined by~$P$ and can be computed as
\begin{equation}
\label{eq:tau-sign-definition}
\textstyle \tau(P):=\sign(s)\cdot\prod_{l=1}^{k} {(i\eta_{e_{s(2l-1)}}\overline{\eta}{}_{e_{s(2l)}})}\exp[-\frac{i}{2}\wind(\gamma_l)]
\end{equation}
if~$P$ is decomposed into a collection of non-intersecting loops and~$k$ paths~$\gamma_l$ running from~$e_{s(2l-1)}$ to~$e_{s(2l)}$, where~$\wind(\gamma_l)$ denotes the total rotation angle of~$\gamma_l$.
\end{theorem}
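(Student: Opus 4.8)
The plan is to combine three ingredients: Wick's theorem for Grassmann integrals, a Jacobi-type identity relating Pfaffians of submatrices of $\widehat{\mathrm{K}}^{-1}$ to complementary minors of $\widehat{\mathrm{K}}$ itself, and the measure-preserving dimer correspondence on the terminal graph $G^{\mathrm{K}}$ that already underlies Theorem~\ref{thm:kac-ward}. The middle equality in~(\ref{eq:multi-phi}) is definitional: expanding the Gaussian Grassmann integral gives $\Corr{\widehat{\mathrm{K}}}{\phi_{e_1}\dots\phi_{e_{2k}}}=\Pf[\,\widehat{\mathrm{K}}^{-1}_{e_p,e_q}\,]_{p,q=1}^{2k}$, so all the work lies in the identification with the combinatorial sum.

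First I would set $S=\{e_1,\dots,e_{2k}\}$ and invoke the Jacobi (complementary minor) identity for Pfaffians of an invertible antisymmetric matrix: $\Pf[(\widehat{\mathrm{K}}^{-1})_S]=\pm\,\Pf[\widehat{\mathrm{K}}_{S^c}]/\Pf[\widehat{\mathrm{K}}]$, where $\widehat{\mathrm{K}}_{S^c}$ is the principal submatrix obtained by deleting the rows and columns indexed by $e_1,\dots,e_{2k}$. By Theorem~\ref{thm:kac-ward} the denominator equals $\pm\mathcal{Z}_G$, which cancels the prefactor $\mathcal{Z}_G^{-1}$ on the right-hand side of~(\ref{eq:multi-phi}). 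It therefore remains to give a combinatorial reading of the single Pfaffian $\Pf[\widehat{\mathrm{K}}_{S^c}]$.

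Next I would expand $\Pf[\widehat{\mathrm{K}}_{S^c}]$ as a signed sum over perfect matchings of the index set $E(G)\setminus S$ and feed it into the same bijection between configurations and dimer coverings of $G^{\mathrm{K}}$ that proves Theorem~\ref{thm:kac-ward}. Deleting the row and column of each $e_i$ removes the corresponding vertex of $G^{\mathrm{K}}$, which is precisely what forces the local constraint at the midpoint $z_{e_i}$ described before the statement: degree $1$ carrying the half-edge identified with $e$ when $\overline{e}\notin\mathrm{E}$, and degree $0$ when both $e,\overline{e}\in\mathrm{E}$. Consequently the surviving matchings are in bijection with $P\in\mathcal{E}_G(e_1,\dots,e_{2k})$, and the Kac--Ward weights $(x_ex_{e'})^{1/2}$ together with the half-edge weights $x_e^{1/2}$ reassemble into $x(P)$.

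The main obstacle is the sign $\tau(P)$. One has to show that the sign carried by each surviving matching in the Pfaffian expansion, together with the gauge factors $\eta_e$ produced by the conjugation $\widehat{\mathrm{K}}=i\mathrm{U}^*\mathrm{K}\mathrm{U}$ and the winding phases $\exp[\tfrac{i}{2}\wind(e,e')]$ built into $\mathrm{T}$ and collected along each path, collapses to the closed form~(\ref{eq:tau-sign-definition}). Concretely, the $i\eta_{e_{s(2l-1)}}\overline{\eta}_{e_{s(2l)}}$ factors track the endpoints of the $k$ open paths, $\exp[-\tfrac{i}{2}\wind(\gamma_l)]$ accumulates the turning along $\gamma_l$, and $\sign(s)$ records the pairing of endpoints. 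The delicate points are that $\tau(P)$ must come out independent of the decomposition of $P$ into loops and paths (changing the decomposition alters each $\wind(\gamma_l)$ by multiples of $2\pi$ and compensates in $\sign(s)$) and that the accumulated phases genuinely combine into a global $\pm1$; I expect this bookkeeping, rather than the structural steps above, to be the technical heart of the argument.
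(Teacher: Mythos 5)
Your proposal follows precisely the route of the proof the paper points to (see \cite[Theorem~1.2]{chelkak-cimasoni-kassel-16} and the remark after Theorem~\ref{thm:kac-ward}): Wick's theorem for the Grassmann integral, the Jacobi/complementary-minor identity $\Pf[(\widehat{\mathrm{K}}^{-1})_S]=\pm\Pf[\widehat{\mathrm{K}}_{S^c}]/\Pf[\widehat{\mathrm{K}}]$, and the terminal-graph dimer correspondence with the vertices indexed by $e_1,\dots,e_{2k}$ deleted, so the approach is essentially the same. The one step you defer --- checking that the Pfaffian signs, the gauge factors $\eta_e$ and the accumulated winding phases collapse to the decomposition-independent sign $\tau(P)$ --- is indeed the technical core of that argument (it rests on a Whitney-type lemma for windings of planar curves), but you have correctly isolated it and assembled all the ingredients it requires.
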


For an edge~$e$ of~$G$, {introduce a real weight}~$t_e:=(x_e\!+\!x_{e}^{-1})^\frac{1}{2}=(\frac{1}{2}\sin\theta_e)^{-\frac{1}{2}}$.
%so that~$t_ex_e^{\frac{1}{2}}=(\cos\tfrac{1}{2}\theta_e)^{-1}$.
%Also, set~$\lambda:=\exp[i\frac{\pi}{4}]$.
\begin{definition}
\label{def:observables-edges}
Denote~$\psi(z_e)\!:=\!{t_e\cdot (\eta_e\phi_e+\eta_{\bar{e}}\phi_{\bar{e}})}$. Given two edges~$a$ and~$e$ of~$G$, the \emph{two-point fermionic observables} %~$\Phi_G(a,e)\in\mathbb{R}$ and~$F_G(a,z_e)\in\mathbb{C}$
are defined as
\begin{align*}
\Phi_G(a,e):=~\Corr{\widehat{\mathrm{K}}}{t_e\phi_et_a\phi_a}&= \textstyle \mathcal{Z}_G^{-1}\sum_{P\in\mathcal{E}_G(a,e)}t_at_ex(P)({ -i\eta_a\overline{\eta}{}_e}\exp[-\tfrac{i}{2}\wind(\gamma_P)])\,,\\
F_G(a,z_e):=\Corr{\widehat{\mathrm{K}}}{\psi(z_e)t_a\phi_a}
%&(\eta_e\Phi_G(a,e)+\eta_{\bar{e}}\Phi_G(a,\overline{e}))\notag \\
&=\textstyle  \mathcal{Z}_G^{-1}\cdot {(-i\eta_a)}\!\sum_{P\in\mathcal{E}_G(a,z_e)}t_at_ex(P)\exp[-\frac{i}{2}\wind(\gamma_P)]\,,
\end{align*}
where~$\mathcal{E}_G(a,z_e):=\mathcal{E}_G(a,e)\cup\mathcal{E}_G(a,\overline{e})$ and~$\gamma_P$ denotes a path running from~$a$ to~$z_e$ obtained by decomposing~$P$ into a collection of non-intersecting contours.
\end{definition}
\begin{remark} In the \emph{critical} Ising model on the square lattice (or the critical \mbox{Z-invariant} model on an isoradial graph~\cite{boutillier-detiliere-11,chelkak-smirnov-12}), the functions~$F_G(a,z_e)$ are \emph{discrete holomorphic} away from the edge~$a$, see Section~\ref{subsect:s-holomorphicity}. This property was used in~\cite{chelkak-smirnov-12} to prove their convergence to conformal covariant limits as the mesh size tends to zero and in~\cite{hongler-thesis-10,hongler-smirnov-13} to analyze the scaling limit of the energy density field.
\end{remark}

\subsection{Disorder operators}
\label{subsect:spin-disorder} We now describe another approach to %essentially the same
fermionic observables via the spin-disorder formalism of Kadanoff and Ceva~\cite{kadanoff-ceva-71}. Given \emph{vertices} $v_1,...,v_{2n}$ of~$G$, the correlation of \emph{disorder operators}~$\mu_{v_1},...,\mu_{v_{2n}}$ is defined as
\begin{equation}
\label{eq:disorders-def}
\textstyle \Corr{G}{\mu_{v_1}...\mu_{v_{2n}}}:= \mathcal{Z}_G^{-1}\cdot \mathcal{Z}_G^{[v_1,...,v_{2n}]}\,,\quad \mathcal{Z}_G^{[v_1,...,v_{2n}]}:=\sum_{P\in\mathcal{E}_G(v_1,...,v_{2n})}x(P)\,.
\end{equation}

It is easy to see that~$\mathcal{Z}_G^{[v_1,...,v_{2n}]}$ can be thought of as a partition function of the Ising model defined on the faces of a double-cover~$G^{[v_1,...,v_{2n}]}$ of the graph~$G$ that branches over~$v_1,...,v_{2n}$, with the \emph{spin-flip symmetry constrain}~$\sigma_{u^\sharp}\sigma_{u^\flat}=-1$ for any pair of faces~$u^\sharp$ and $u^\flat$ lying over the same face of~$G$.
One can go further and introduce mixed correlations
\begin{equation}
\label{eq:disorders-spins-mixed-def}
\Corr{G}{\mu_{v_1}...\mu_{v_{2n}}\sigma_{u_1}...\sigma_{u_m}}\;:=\; \Corr{G}{\mu_{v_1}...\mu_{v_{2n}}}\cdot \mathbb{E}_{G^{[v_1,...,v_{2n}]}}[\sigma_{u_1}...\sigma_{u_m}]\,,
\end{equation}
where~$u_1,...,u_m$ should be thought of as faces of the double-cover~$G^{[v_1,...,v_{2n}]}$ described above. By definition of the Ising model on~$G^{[v_1,...,v_{2n}]}$, these quantities obey the sign-flip symmetry between the sheets. It is not hard to see that they admit the following combinatorial interpretation that generalizes both~\eqref{eq:spin-spin-low-temperature} and~\eqref{eq:disorders-def}:
\[
\textstyle \Corr{G}{\mu_{v_1}...\mu_{v_{2n}}\sigma_{u_1}...\sigma_{u_m}} \;=\;
\pm\, \mathcal{Z}_G^{-1}\cdot \sum_{P\in\mathcal{E}_G(v_1,...,v_{2n})}x(P)(-1)^{\mathrm{loops}_{[u_1,...,u_m]}(P\triangle P_0)}\,,
\]
where the~$\pm$ sign depends on the identification of~$u_1,...,u_m$ with faces of~$G$ and~$P_0$ is a fixed collection of edge-disjoint paths matching the vertices~$v_1,...,v_{2n}$ in pairs.
\begin{remark}
Provided~$x_e=x_e^*$, the domain walls representation~\eqref{eq:disorders-def} of disorder correlations coincides with the high-temperature expansion~\eqref{eq:spin-spin-low-temperature} of spin correlations in the dual model. A similar statement holds for mixed correlations: under the Kramers--Wannier duality, disorders are mapped into spins and vice versa.
\end{remark}

Let us now focus on the case when~$m=2n$ and each of the faces~$u_s$ is incident to the corresponding vertex~$v_s$. We call such a pair~$c_s:=(u_s,v_s)$ a~\emph{corner} of the graph~$G$ and attach to~$v_s$ a \emph{decoration} (i.e., a small straight segment oriented from~$u_s$~\emph{towards~$v_s$}) representing this corner. {Let~$\eta_c$ denote the complex conjugate of a square root of the direction of the corresponding decoration, multiplied by $\varsigma$.}

\begin{proposition}[e.g., see~{\cite[Lemma~3.1]{chelkak-cimasoni-kassel-16}}]
\label{prop:multi-chi}
 The following representation holds:
\[
\textstyle \Corr{G}{\mu_{v_1}...\mu_{v_{2n}}\sigma_{u_1}...\sigma_{u_{2n}}}\;=\;\pm\,\mathcal{Z}_G^{-1}\sum_{P\in \mathcal{E}_G(c_1,...,c_{2n})}x(P)\tau(P)\,,
\]
where the set~$\mathcal{E}_G(c_1,...,c_{2n})$ is obtained by attaching decorations~$c_1,...,c_{2n}$ to subgraphs from~$\mathcal{E}_G(v_1,...,v_{2n})$ and the sign~$\tau(P)=\pm 1$ is defined exactly as in~(\ref{eq:tau-sign-definition}).
\end{proposition}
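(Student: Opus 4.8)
The plan is to reduce the statement to the combinatorial interpretation of mixed correlators displayed after \eqref{eq:disorders-spins-mixed-def}, and then to match the loop-counting sign appearing there with the winding-based sign $\tau(P)$ from \eqref{eq:tau-sign-definition}. In the corner case $m=2n$ that expansion reads
\[
\Corr{G}{\mu_{v_1}\dots\mu_{v_{2n}}\sigma_{u_1}\dots\sigma_{u_{2n}}}=\pm\,\mathcal{Z}_G^{-1}\sum_{P\in\mathcal{E}_G(v_1,\dots,v_{2n})}x(P)(-1)^{\mathrm{loops}_{[u_1,\dots,u_{2n}]}(P\triangle P_0)}.
\]
Since attaching the (weightless) decorations $c_1,\dots,c_{2n}$ sets up a weight-preserving bijection between $\mathcal{E}_G(v_1,\dots,v_{2n})$ and $\mathcal{E}_G(c_1,\dots,c_{2n})$, it suffices to prove, for each such configuration $P$, the pointwise sign identity
\[
\tau(P)=\pm(-1)^{\mathrm{loops}_{[u_1,\dots,u_{2n}]}(P\triangle P_0)},
\]
with a single global sign independent of $P$ that is then absorbed into the $\pm$ in front of the sum.

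For the sign identity I would first invoke the well-definedness of $\tau(P)$ established together with Theorem~\ref{thm:multi-fermions}: the right-hand side of \eqref{eq:tau-sign-definition} depends neither on the way $P$ is split into loops and paths, nor on the pairing $s$ of its endpoints. This freedom lets me fix a convenient decomposition of $P$ into non-self-intersecting loops together with $n$ paths $\gamma_l$ running between the corners $c_{s(2l-1)}$ and $c_{s(2l)}$, and a convenient matching $P_0$ of the vertices $v_s$. I would then close each path $\gamma_l$ along the corresponding strand of $P_0$, turning $P\triangle P_0$ into a union of closed loops, and track the resulting change in $\prod_l(i\eta_{c_{s(2l-1)}}\overline{\eta}_{c_{s(2l)}})\exp[-\tfrac{i}{2}\wind(\gamma_l)]$. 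The orientation of each decoration from $u_s$ towards $v_s$, together with the definition of $\eta_c$ as $\varsigma$ times the conjugate square root of that direction, is precisely what supplies the missing half-turns at the endpoints, so that each completed loop carries a genuine factor $\exp[-\tfrac{i}{2}\wind(\mathrm{loop})]$.

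The heart of the argument is then a Whitney-type computation: for a simple closed contour the total turning equals $\pm2\pi$, whence $\exp[-\tfrac{i}{2}\wind(\mathrm{loop})]=-1$ for every loop that lifts to a closed loop on the double cover $G^{[v_1,\dots,v_{2n}]}$, while a loop encircling an odd number of branch points $v_s$ --- equivalently, via the incidence $v_s\in\partial u_s$ and the placement of the decoration, an odd number of the marked faces $u_s$ --- fails to close on the cover and contributes the opposite sign. Summing these contributions reproduces exactly $(-1)^{\mathrm{loops}_{[u_1,\dots,u_{2n}]}(P\triangle P_0)}$, up to the global factor coming from the reference configuration $P_0$ and the $\sign(s)$ prefactor.

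I expect the main obstacle to be this last bookkeeping step: one must verify that the spinorial corner weights $\eta_c$ and the factor $\varsigma$ are calibrated so that the winding sign detects the \emph{parity of marked faces surrounded} by each loop, rather than merely counting all loops as in the undecorated low-temperature expansion \eqref{eq:spin-spin-low-temperature}. Making this precise requires a careful analysis of how the half-integer windings at the two corners of a path interact with the branching of the double cover over the $v_s$, and of the effect of changing the auxiliary matching $P_0$; once invariance under these choices is confirmed, the global sign is pinned down by evaluating both sides on a single explicit configuration.
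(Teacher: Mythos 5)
The paper offers no proof of this proposition: it is quoted from~\cite[Lemma~3.1]{chelkak-cimasoni-kassel-16}, so there is no in-paper argument to compare yours with. Your reduction is nevertheless the natural one, and it is legitimate to start from the displayed loop-parity expansion of~$\Corr{G}{\mu_{v_1}...\mu_{v_{2n}}\sigma_{u_1}...\sigma_{u_m}}$, since the paper asserts it; the whole content is then the pointwise identity~$\tau(P)=\pm(-1)^{\mathrm{loops}_{[u_1,...,u_{2n}]}(P\triangle P_0)}$, and there you have a genuine gap. The mechanism you propose --- that~$\exp[-\frac{i}{2}\wind(\mathrm{loop})]$ equals~$-1$ for loops that close up on the double cover and ``the opposite sign'' for loops encircling an odd number of branch points --- cannot work: the total turning of a simple closed planar curve is~$\pm2\pi$ regardless of which punctures it encircles, and passing to a double cover does not alter a turning number. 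So the winding factor of a closed-up loop is~$-1$ in \emph{both} cases, and your argument as written produces no distinction between ``odd'' and ``even'' loops at all, i.e.\ it would prove~$\tau(P)$ constant rather than equal to the loop-parity sign.

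The distinction has to come from elsewhere, and the missing ingredients are roughly these. First, loops in a non-crossing decomposition of the \emph{decorated} configuration~$P$ that are disjoint from~$P_0$ automatically surround an even number of the faces~$u_s$: a path~$\gamma_l$ has its endpoints at the face-centres~$u_{s(2l-1)},u_{s(2l)}$ and cannot cross such a loop, so the marked faces inside it come in pairs --- this is why those loops may be ignored, not because of their winding. Second, one should convert~$(-1)^{\mathrm{loops}_{[u_1,...,u_{2n}]}(P\triangle P_0)}$ into an intersection parity~$(-1)^{|(P\triangle P_0)\cap\varkappa|}$ with a fixed collection of dual cuts~$\varkappa$ matching the~$u_s$; the factor~$(-1)^{|P_0\cap\varkappa|}$ is then a global constant. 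Third, the remaining claim --- that~$\tau(P)\cdot(-1)^{|P\cap\varkappa|}$ does not depend on~$P$ --- is where Whitney's relation between turning numbers and crossing numbers, the~$\sign(s)$ prefactor, and the corner factors~$i\eta_{c}\overline{\eta}{}_{c'}$ actually enter; the branch structure is felt only through the intersection numbers with~$\varkappa$ and~$P_0$, never through the winding of a closed contour. Your final paragraph correctly flags that the calibration of~$\eta_c$ and the choice of~$P_0$ need care, but the flagged ``bookkeeping'' is in fact the entire proof, and the one concrete mechanism you do commit to is the wrong one. (For what it is worth, the cited reference obtains this statement by a more algebraic route, via the terminal-graph dimer/Pfaffian expansion and the linear change of variables~$\phi_e\mapsto\chi_c$ of Remark~\ref{rem:phi-chi-change}, which avoids matching the two signs configuration by configuration.)
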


\begin{definition}
\label{def:observables-corners}
Let~$c$ and~$d$ be corners of~$G$. Similarly to Definition~\ref{def:observables-edges}, we set
\[
\textstyle \Phi_G(c,d)\;:=\;\mathcal{Z}_G^{-1}\sum_{P\in\mathcal{E}(c,d)}x(P)({ -i\eta_c\overline{\eta}{}_d}\exp[-\tfrac{i}{2}\wind(\gamma_P)])\,,
\]
where~$\wind(\gamma_P)$ denotes the total rotation angle of a path~$\gamma_P$ running from~$c$ to~$d$ obtained by decomposing~$P$ into a collection of non-intersecting contours.
\end{definition}

\begin{remark}
\label{rem:phi-chi-change} The similarity of combinatorial expansions given in Theorem~\ref{thm:multi-fermions} and Proposition~\ref{prop:multi-chi} allows one to introduce a linear change of Grassmann variables~$\phi_e$ assigned to oriented edges~$e$ emanating from a given vertex~$v$ to a new set of variables~$\chi_c$ labeled by decorations~$c$ attached to~$v$, so that~$\Phi_G(c,d)=\Corr{\widehat{\mathrm{K}}}{\chi_d\chi_c}$ provided that~$c$ and~$d$ are attached to different vertices of~$G$; see~\cite[Section~3.4]{chelkak-cimasoni-kassel-16}. In particular, this implies that~multi-point correlations discussed in Proposition~\ref{prop:multi-chi} satisfy~\emph{Pfaffian identities} similar to multi-point correlations~\eqref{eq:multi-phi}. Further, given a corner~$c$ and an edge~$e$, one can introduce the notation~$\Phi_G(c,e):=\Corr{\widehat{\mathrm{K}}}{t_e\phi_e\chi_c}$ and~$F_G(c,z_e):=\Corr{\widehat{\mathrm{K}}}{\psi(z_e)\chi_c}$. All these observables admit combinatorial representations similar to the ones given in Definition~\ref{def:observables-edges} and Definition~\ref{def:observables-corners}.
\end{remark}

\subsection{S-holomorphicity}
\label{subsect:s-holomorphicity}
From now on, we mostly focus on the case when~$G$ is a subgraph of the square grid and the model is homogeneous, i.e.~$x_e=x=\tan\tfrac{1}{2}\theta$ for all edges of~$G$ and some fixed~$\theta\in(0,\frac{\pi}{2})$. In this case, we always draw a decoration corresponding to a corner~$d=(u,v)$ so that it is directed from the center of the corresponding face~$u=u(d)$ \emph{towards the vertex}~$v=v(d)$.
%The next definition describes the crucial property of fermionic observables discussed above.
\begin{definition} We say that a complex-valued function~$F(\cdot)$ defined on mid-edges~$z_e$ of~$G$ and a real-valued function~$\Phi(\cdot)$ defined on corners~$d$ of~$G$ satisfy the \emph{massive s-holomorphicity condition} for a given pair of adjacent~$z_e$ and~$d$, if
\begin{equation}
\label{eq:s-hol-condition}
\Phi(d)\;=\;%2^{\frac{1}{2}}
\mathrm{Re} [\,e^{\pm \frac{i}{2}(\frac{\pi}{4}-\theta)}{\overline{\eta}_d}F(z_e)\,]\,,
\end{equation}
where the sign is~``$+$'' if~$z_e$ is to the right of~$d$ and~``$-$'' otherwise.
\end{definition}

\begin{remark}
This definition first appeared in~\cite{smirnov-06icm,smirnov-10,chelkak-smirnov-12} {in the critical model context.}
 %(without the additional normalizing factor~$2^\frac{1}{2}$, which we prefer to introduce in order to have a better correspondence with the standard CFT notation in Section~\ref{section:conformal-invariance}). 
 Note that the papers~\cite{smirnov-10icm,hongler-smirnov-13,chelkak-hongler-izyurov-15} use a slightly different convention for the notion of s-holomorphicity, {which corresponds to the choice $\varsigma=i$ of the global unimodular factor in the definition of~$\eta_e$ and~$\eta_c$.}
\end{remark}

It is well known that observables~$F_G(a,\,\cdot\,),\Phi_G(a,\,\cdot\,)$ or~$F_G(c,\,\cdot\,),\Phi_G(c,\,\cdot\,)$ satisfy~\eqref{eq:s-hol-condition} away from the edge~$a$ or the corner~$c$. This can be deduced both from their combinatorial representations (e.g., see~\cite[Section~4]{smirnov-10icm}) or from the identity
\begin{equation}
\label{eq:mu-mu+s-s=1}
\sin\theta_e\cdot \Corr{G}{\mu_{v^-(e)}\mu_{v^+(e)}\mathcal{O}[\mu,\sigma]}+\cos\theta_e\cdot\Corr{G}{\sigma_{u^-(e)}\sigma_{u^+(e)}\mathcal{O}[\mu,\sigma]}= \Corr{G}{\mathcal{O}[\mu,\sigma]}\,,
\end{equation}
where~$v^\pm (e)$ and~$u^{\pm}(e)$ denote the two vertices and the two faces adjacent to a given edge~$e$ and~$\mathcal{O}[\mu,\sigma]$ stands for an arbitrary product of other disorders and spins, see~\cite[Section~3.6]{chelkak-cimasoni-kassel-16} for more comments on these linear relations.

\begin{remark} For the critical Ising model on the square grid (and similarly for the critical Z-invariant model on an isoradial graph), the factor~$e^{\pm\frac{1}{2}(\frac{\pi}{4}-\theta)}$ in~(\ref{eq:s-hol-condition}) disappears and these equations can be understood as a (strong) form of discrete Cauchy--Riemann equations for the complex-valued function~$F(\cdot)$, see~\cite{smirnov-06icm,smirnov-10,chelkak-smirnov-12}.
\end{remark}

Working with subgraphs of the square grid, let us focus on the real-valued observables~$\Phi_G(c,d)$ restricted onto one of the four possible types of the corners~$d$ and assume that {all the square roots in the definition of~$\eta_d$} are chosen to be the same. It is easy to check (e.g., see~\cite[Lemma~4.2]{beffara-duminil-copin-12}) that, away from~$c$ and the boundary of~$G$, condition~\eqref{eq:s-hol-condition} implies the so-called \emph{massive harmonicity} of~$\Phi_G(c,d)$:
\begin{equation}
\label{eq:massive-harmonicity}
\Delta_\theta\Phi_G(c,d)\;:=\; \Phi_G(c,d)-\tfrac{1}{4}\sin(2\theta){\textstyle \sum\nolimits_{d'\sim d}}\Phi_G(c,d')\;=\;0\,,
\end{equation}
where the sum is taken over four nearby corners~$d'$ of the same type as~$d$.
We will use this equation in Section~\ref{section:explicit-formulae} for explicit computations of the diagonal spin-spin expectations in the full plane via \emph{spinor observables}, to which we now move on.
\begin{remark}
\label{rem:massive=>criticality}
Note that one obtains the same equation for~$\Phi(c,d)$ if~$\theta$ is replaced by~$\theta^*=\frac{\pi}{2}-\theta$ according to the Kramers--Wannier duality~\eqref{eq:kramers-wannier}. It is also worth noting that one can use the massive harmonicity of related fermionic observables arising in the random-cluster representation of the Ising model to prove the criticality of the self-dual value~$\theta_{\mathrm{crit}}=\frac{\pi}{4}$ and to compute the exact rate of the exponential decay of spin-spin expectations in the supercritical model, see~\cite{beffara-duminil-copin-12}.
\end{remark}

\subsection{Double-covers~{$[G;u_1,...,u_m]$} and spinor observables}
\label{subsect:spinors} Given a set of faces~$u_1,...,u_m$ of~$G$, let us fix a collection of paths $\varkappa$ on the graph dual to~$G$ that match these faces (and the outer face if~$m$ is odd) in pairs. Repeating the proof of Theorem~\ref{thm:kac-ward}, it is easy to rewrite~\eqref{eq:spin-spin-low-temperature} as
\[
\textstyle \mathcal{Z}_G\mathbb{E}_G[\sigma_{u_1}...\sigma_{u_m}]\; =\;\pm\,\Pf[{\widehat{\mathrm{K}}_{[u_1,...,u_m]}}]\,,
\]
where~$\widehat{\mathrm{K}}_{[u_1,...,u_m]}=i\mathrm{U}\mathrm{K}_{[u_1,...,u_m]}\mathrm{U}^*$ and the matrix~$\mathrm{K}_{[u_1,...,u_m]}$ is obtained from~$\mathrm{K}$ by replacing its entries~$\mathrm{K}_{e,\bar{e}}=+1$ by~$-1$ if~$e$ crosses one of the paths from~$\varkappa$. Moreover, one has the following analogue of Theorem~\ref{thm:multi-fermions}:
\begin{align*}
\textstyle \Corr{\widehat{\mathrm{K}}_{[u_1,...,u_m]}}{\phi_{e_1}...\phi_{e_{2k}}}
%:&= ~ \Pf[\,(\widehat{\mathrm{K}}_{[u_1,...,u_m]}^{-1})_{e_p,e_q}\,]_{p,q=1}^{2k} \\
 =\; (\mathcal{Z}_G\mathbb{E}_G[\sigma_{u_1}...\sigma_{u_m}])^{-1}\sum_{P\in\mathcal{E}_G(e_1,...,e_{2k})}x(P)\tau_{[u_1,...,u_m]}(P)\,,
\end{align*}
where~$\tau_{[u_1,...,u_m]}(P):=\tau(P)(-1)^{|P\cap\varkappa|}$; note that this definition depends on~$\varkappa$.

\smallskip

There exists a standard way to make the above construction canonical (i.e. independent of the choice of~$\varkappa$). To do so, let us consider a \emph{double-cover}~$[G;u_1,...,u_m]$ of the graph~$G$ \emph{branching over the faces}~$u_1,...,u_m$ (so that~$\varkappa$ defines its section). Now let us assign Grassmann variables~$\phi_e$ to the edges of~$[G;u_1,...,u_m]$ with the convention~$\phi_{e^\sharp}=-\phi_{e^\flat}$ if~$e^\sharp$ and~$e^\flat$ lie over the same edge of~$G$. One can write
\[
\textstyle \tau_{[u_1,...,u_n]}(P)\,=\,\tau(P)(-1)^{\mathrm{loops}_{[u_1,...,u_m]}(P)} \prod_{l=1}^{k}\mathrm{sheet}_{[G;u_1,...,u_m]}(\gamma_l;e_{s(2l-1)},e_{s(2l)})\,,
\]
where~$\mathrm{sheet}_{[G;u_1,...,u_m]}(\gamma;e,e')$ is equal to $+1$ if the lift of~$\gamma$ onto~$[G;u_1,...,u_m]$ links~$e$ and~$e'$ (considered as the edges on this double-cover) and to $-1$ otherwise. Note that the individual factors may depend on the chosen decomposition of~$P$ into non-intersecting loops and paths but~$\tau_{[u_1,...,u_n]}(P)$ is uniquely determined by~$P$.

\begin{remark} The multi-point fermionic correlators introduced above have a built-in \emph{sign-flip symmetry} between the sheets of~$[G;u_1,...,u_n]$. Functions on double-covers obeying this property are often called \emph{spinors}. As mentioned in Remark~\ref{rem:phi-chi-change} (see also~\cite[Section~3.4]{chelkak-cimasoni-kassel-16}), one can simultaneously use the same notation for the Grassmann variables~$\chi_c$, which are labeled by the corners of~$[G;u_1,...,u_n]$ and can be written as the products~$\mu_{v(c)}\sigma_{u(c)}$ using the language of disorder operators.
In this language, the spinor property of the corresponding observables is a consequence of the similar sign-flip symmetry of mixed correlations~\eqref{eq:disorders-spins-mixed-def}.
\end{remark}

\begin{definition}
\label{def:spinor-observables} Given a set of faces~$u_1,...,u_m$ of~$G$, two distinct corners~$c$ and~$d$ lying on the double-cover~$[G;u_1,...,u_m]$ and a mid-edge~$z_e$ on~$[G;u_1,...,u_m]$, we combinatorially define \emph{spinor observables} with a source at the corner~$c$ by
\begin{align*}
\Phi_{[G;u_1,...,u_m]}(c,d)\;:=\;&\textstyle (\mathcal{Z}_G\mathbb{E}_G[\sigma_{u_1}...\sigma_{u_m}])^{-1}\sum_{P\in\mathcal{E}(c,d)}x(P)\tau_{[u_1,...,u_m]}(P)\,,\\
F_{[G;u_1,...,u_m]}(c,z_e)\;:=\;&\textstyle (\mathcal{Z}_G\mathbb{E}_G[\sigma_{u_1}...\sigma_{u_m}])^{-1}\cdot
{(-i\eta_c)}\sum_{P\in\mathcal{E}(c,z_e)}t_ex(P)\nu(P)\,,
\end{align*}
where~$\nu(P):=\exp[-\tfrac{i}{2}\wind(\gamma_P)]\cdot (-1)^{\mathrm{loops}_{[u_1,...,u_m]}(P)}\cdot \mathrm{sheet}_{[u_1,...,u_m]}(\gamma_P;c,z_e)$. %The factor~$\nu(P)$ is uniquely determined by~$P$.
\end{definition}

\begin{remark} Similarly to their non-branching counterparts, spinor observables satisfy the massive s-holomorphicity condition~\eqref{eq:s-hol-condition} everywhere on the double-cover~$[G;u_1,...,u_m]$ away from the source corner~$c$ and the boundary of~$G$, including the vicinities of the faces~$u_1,...,u_m$. The proof mimics the case~$m=0$ and can be done, e.g., using the same combinatorial arguments (see~\cite[Section~3.1]{chelkak-hongler-izyurov-15}).
\end{remark}

\subsection{Particular values of spinor observables}
 \label{subsect:spinor-values} Let us now focus on the special situation when the source corner~$c$ is incident to one of the faces~$u_1,...,u_m$. Given a face~$u$, below we denote by~$c=u^{[\eta]}$ one of its corners {such that the corresponding decoration goes in the direction $\varsigma^2\overline{\eta}^2=i\overline{\eta}^2$ and so~$\eta_c=\eta$.} The following lemma shows the relevance of such spinor observables for the analysis of spin correlations.

\begin{lemma}
\label{lemma:Phi-values-near-u}
Let a face~$\widetilde{u}_1$ of~$G$ be such that the corners~$c=u_1^{[\eta]}$ and $d=\widetilde{u}_1^{[i\eta]}$ share a vertex of~$[G;u_1,...,u_m]$. Then the following identity is fulfilled:
\begin{equation}
\label{eq:Phi-cd=ratio-of-spins}
\Phi_{[G;u_1,...,u_m]}(c,d)\;=\;
(\mathbb{E}_G[\sigma_{u_1}\sigma_{u_2}...\sigma_{u_m}])^{-1}\!\cdot\mathbb{E}_G[\sigma_{\widetilde{u}_1}\sigma_{u_2}...\sigma_{u_m}]\,.
\end{equation}
Moreover, if~$\widetilde{u}_1\ne u_2,...,u_m$, then one has $\Delta_\theta \Phi_{[G;u_1,...,u_m]}(c,d)=(\cos\theta)^2$.
\end{lemma}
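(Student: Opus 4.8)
I would establish the two claims separately. For \eqref{eq:Phi-cd=ratio-of-spins} the quickest route is the spin--disorder dictionary $\chi_c=\mu_{v(c)}\sigma_{u(c)}$. Since $c=u_1^{[\eta]}$ and $d=\widetilde u_1^{[i\eta]}$ share the vertex $v$, the decoration directions $\varsigma^2\overline\eta^2=i\overline\eta^2$ and $\varsigma^2\overline{(i\eta)}{}^2=-i\overline\eta^2$ are opposite, so $u_1$ and $\widetilde u_1$ are the two faces opposite across $v$ and the path joining their centres through $v$ is straight, with zero winding; moreover the prefactor $i\eta_c\overline\eta_d=i\eta\,\overline{i\eta}=1$. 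Thus $\Phi_{[G;u_1,\ldots,u_m]}(c,d)=\Corr{\widehat{\mathrm K}_{[u_1,\ldots,u_m]}}{\chi_d\chi_c}$ simplifies through the two cancellations built into the formalism, $\mu_v^2=1$ (equivalently $\mathcal E_G(v,v)=\mathcal E_G$) and $\sigma_{u_1}^2=1$: the latter fuses the spin carried by $\chi_c$ with the branch spin at $u_1$, so the numerator becomes $\mathcal Z_G\mathbb E_G[\sigma_{\widetilde u_1}\sigma_{u_2}\ldots\sigma_{u_m}]$ while the denominator is unchanged, which is \eqref{eq:Phi-cd=ratio-of-spins}.

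To verify this directly from the combinatorial definition I would match each $P\in\mathcal E(c,d)$ with the even subgraph $P_0\in\mathcal E_G$ obtained by deleting the two decorations and prove the sign identity $\tau_{[u_1,\ldots,u_m]}(P)=(-1)^{\mathrm{loops}_{[\widetilde u_1,u_2,\ldots,u_m]}(P_0)}$, i.e.\ that sliding the source from $u_1$ to $\widetilde u_1$ relocates the branch point. The main obstacle is the sign bookkeeping at $v$: once $P_0$ occupies the edges incident to $v$, the connecting path cannot stay straight inside a non-crossing loop decomposition, so $\exp[-\tfrac i2\wind(\gamma_P)]$ and $\mathrm{sheet}(\gamma_P;c,d)$ acquire a genuine $P_0$-dependence. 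I expect the standard crossing resolution at $v$ to identify this dependence with the parity of the number of loops of $P_0$ that separate $u_1$ from $\widetilde u_1$, namely $(-1)^{\mathrm{loops}_{[u_1,\ldots]}(P_0)+\mathrm{loops}_{[\widetilde u_1,\ldots]}(P_0)}$, which is exactly the factor turning the $u_1$-weighting into the $\widetilde u_1$-weighting.

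For the Laplacian I would use that $\Phi_{[G;u_1,\ldots,u_m]}(c,\cdot)$ is massively harmonic \eqref{eq:massive-harmonicity} at every corner of its type away from the source; since $c$ and $d$ occupy the same vertex $v$ (with different corner types), the defect of $\Delta_\theta$ is concentrated at $d$ and equals the strength of the point source that $c$ injects through the $s$-holomorphic fermionic observable $F(c,\cdot)$. The plan is to compute that strength locally: I would isolate the seed of the expansion --- the empty subgraph $P_0=\varnothing$, whose sole contribution is the straight pair of decorations with weight $x(\varnothing)=1$ and sign $+1$ --- and follow it through the relation \eqref{eq:s-hol-condition} on the two mid-edges adjacent to $c$, where the fermionic weight $t_e=(\tfrac12\sin\theta)^{-1/2}$ of $\psi(z_e)$ enters.

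The hard part will be pinning the constant down to exactly $\cos^2\theta$. This requires matching the local source data against the projection in \eqref{eq:s-hol-condition} and the massive-harmonicity coefficient $\tfrac14\sin2\theta$ so that the weights and trigonometric factors combine to $\cos^2\theta$ rather than to a mere proportional constant; an independent check would compute the four neighbour values as in Part~1 (now with the two disorders on distinct vertices), insert them into \eqref{eq:massive-harmonicity}, and reduce the disorder pairs via \eqref{eq:mu-mu+s-s=1}. The hypothesis $\widetilde u_1\ne u_2,\ldots,u_m$ enters precisely here: it ensures $\widetilde u_1$ is not a branch point, so no additional monodromy sign contaminates the seed and the clean value $(\cos\theta)^2$ survives.
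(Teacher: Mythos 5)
Your proof takes essentially the same route as the paper's: the first identity is exactly the operator identity $\chi_d\chi_c\sigma_{u_1}=\mu_v\sigma_{\widetilde u_1}\cdot\mu_v\sigma_{u_1}\cdot\sigma_{u_1}=\sigma_{\widetilde u_1}$ that the paper invokes (with the combinatorial verification deferred to \cite[Lemma~2.6]{chelkak-hongler-izyurov-15}), and your plan for the Laplacian defect --- tracing the failure of the s-holomorphicity relation \eqref{eq:s-hol-condition} at the two mid-edges incident to the source corner $c$ --- is precisely where the paper locates the mismatch, phrased there as the ambiguity that $\Phi_{[G;u_1,...,u_m]}(c,c)=\pm 1$ must take opposite signs for those two mid-edges. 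The points you leave open (the sign bookkeeping, e.g.\ the prefactor of the minimal configuration is $-i\eta_c\overline{\eta}{}_d=-1$ in the convention of Definition~\ref{def:observables-corners} and is compensated by the sheet and loop factors in $\tau_{[u_1,...,u_m]}$, and the final evaluation to $(\cos\theta)^2$) are exactly the parts the paper itself dispatches as a citation and a ``straightforward computation''.
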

\begin{proof} A combinatorial proof of the first identity can be found in~\cite[Lemma~2.6]{chelkak-hongler-izyurov-15}. Note that if we write~$\Phi_{[G;u_1,...,u_m]}(c,d)= \Corr{G}{\sigma_{u_1}...\sigma_{u_m}}^{-1}\cdot\Corr{G}{\chi_d\chi_c\sigma_{u_1}...\sigma_{u_m}}^{\vphantom{-1}}$ using the notation discussed above, then~\eqref{eq:Phi-cd=ratio-of-spins} reads simply as $\chi_d\chi_c\sigma_{u_1}=\sigma_{\widetilde{u}_1}$\,. The proof of the second identity is a straightforward computation and the mismatch with~\eqref{eq:massive-harmonicity} is caused by the ambiguity in the definition of~$\Phi_{[G;u_1,...,u_m]}(c,c)=\pm 1$: one should choose different signs in order to fulfill the condition~\eqref{eq:s-hol-condition} for the mid-points~$z_e$ of the two edges incident to~$c$, cf.~\cite[Lemma~3.2]{chelkak-hongler-izyurov-15}.
\end{proof}

The last combinatorial result that we will need is special for the case~$m\!=\!2$. Recall that we denote by~$\mathbb{E}_{G}^*[\sigma_{v_1}\sigma_{v_2}]$ the expectations in the Ising model defined on \emph{vertices} of~$G$, with the inverse temperature~$\beta^*=-\tfrac{1}{2}\log\tan\tfrac{1}{2}\theta^*$ (see~Remark~\ref{rem:kramers-wannier}).

\begin{lemma}
\label{lemma:Phi-m=2-values} Given two corners~$c=u_1^{[\eta]}$ and~$d=u_2^{[\rho]}$, denote by~$v(c)\sim u_1$ and~$v(d)\sim u_2$ the corresponding vertices of~$G$. The following identity is fulfilled:
\begin{equation}
\label{eq:Phi-cd-next-to-u2}
\Phi_{[G;u_1,u_2]}(c,d)\;=\;\pm (\mathbb{E}_G[\sigma_{u_1}\sigma_{u_2}])^{-1}\!\cdot \mathbb{E}_G^*[\sigma_{v(c)}\sigma_{v(d)}]\,,
\end{equation}
Moreover, if~$v(c)\!\not\sim\! u_2$, then~$\Delta_\theta\Phi_{[G;u_1,u_2]}(c,d)=\pm (\sin\theta)^2\cdot \Phi_{[G;u_1,u_2]}(c,d')$, where~$d'$ is the corner of~$u_2$ opposite to~$d$. {The $\pm$ signs depend on the choice of square roots in the definition of~$\eta_c,\eta_d,\eta_{d'}$ and the lifts of the corners $c,d,d'$ onto~$[G;u_1,u_2]$.}
\end{lemma}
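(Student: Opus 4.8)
The plan is to establish the two assertions separately, both inside the Kadanoff--Ceva disorder formalism of Section~\ref{subsect:spin-disorder} and in close parallel with the proof of Lemma~\ref{lemma:Phi-values-near-u}. For the identity~\eqref{eq:Phi-cd-next-to-u2} I would first rewrite the spinor observable as a mixed spin--disorder correlation and then invoke Kramers--Wannier duality, whereas for the Laplacian formula I would exploit the massive s-holomorphicity of $\Phi_{[G;u_1,u_2]}(c,\cdot)$ and locate the defect produced by the branching over $u_2$.

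For the first identity I would write, as in Remark~\ref{rem:phi-chi-change} and in the proof of Lemma~\ref{lemma:Phi-values-near-u},
\[
\Phi_{[G;u_1,u_2]}(c,d)\;=\;\Corr{G}{\sigma_{u_1}\sigma_{u_2}}^{-1}\cdot\Corr{G}{\chi_d\chi_c\,\sigma_{u_1}\sigma_{u_2}}\,,
\]
with $\chi_c=\mu_{v(c)}\sigma_{u_1}$ and $\chi_d=\mu_{v(d)}\sigma_{u_2}$, since $u(c)=u_1$ and $u(d)=u_2$. Using $\sigma_{u_1}^2=\sigma_{u_2}^2=1$ and reordering the operators, the four spin factors cancel pairwise and one is left with $\chi_d\chi_c\,\sigma_{u_1}\sigma_{u_2}=\pm\,\mu_{v(c)}\mu_{v(d)}$, the sign recording the reordering and the chosen lifts. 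It then remains to identify the disorder two-point function with the dual spin--spin expectation: by the definition~\eqref{eq:disorders-def} and the duality noted after it, the domain-walls expansion of $\Corr{G}{\mu_{v(c)}\mu_{v(d)}}$ coincides, for $x_e=x_e^*$, with the high-temperature expansion~\eqref{eq:spin-spin-high-temperature} of $\mathbb{E}_G^*[\sigma_{v(c)}\sigma_{v(d)}]$ at the dual inverse temperature $\beta^*$. Substituting yields~\eqref{eq:Phi-cd-next-to-u2}.

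For the second assertion I would start from the massive s-holomorphicity~\eqref{eq:s-hol-condition}, valid on $[G;u_1,u_2]$ off the source $c$, which away from branch points forces the massive harmonicity~\eqref{eq:massive-harmonicity}. The hypothesis $v(c)\not\sim u_2$ ensures that the source singularity plays no role in the five corner values entering $\Delta_\theta\Phi_{[G;u_1,u_2]}(c,d)$, so the only obstruction to $\Delta_\theta\Phi=0$ is the branching over $u_2$. I would re-express $\Phi_{[G;u_1,u_2]}(c,d)$ and its four same-type neighbours in~\eqref{eq:massive-harmonicity} through the values of $F_{[G;u_1,u_2]}(c,\cdot)$ at the mid-edges around $u_2$ via~\eqref{eq:s-hol-condition}; since a loop around $u_2$ flips the sheet, the spinor sign-flip turns one would-be-cancelling contribution into a surviving term carried by the antipodal corner $d'$. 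The weight $(\sin\theta)^2$, as opposed to the $(\cos\theta)^2$ of the source defect in Lemma~\ref{lemma:Phi-values-near-u}, is what one expects from the two disorder insertions across the two edges of $u_2$ meeting the defect: in the basic relation~\eqref{eq:mu-mu+s-s=1} the $\mu\mu$-channel along an edge carries $\sin\theta$ while the $\sigma\sigma$-channel carries $\cos\theta$, and passing from $d$ to the diagonally opposite $d'$ uses the former twice.

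I expect the sign bookkeeping to be the main obstacle. One must fix once and for all the square roots entering $\eta_c,\eta_d,\eta_{d'}$ and the lifts of $c,d,d'$ to $[G;u_1,u_2]$, and then check that the monodromy around $u_2$ reproduces the asserted $\pm$ in front of $(\sin\theta)^2\,\Phi_{[G;u_1,u_2]}(c,d')$. The magnitude and the selection of $d'$ are essentially forced by the local s-holomorphic geometry at $u_2$ (this being the branch-point analogue of the computation giving $(\cos\theta)^2$ in Lemma~\ref{lemma:Phi-values-near-u}), but matching the global sign to these choices is the delicate point that the final sentence of the statement defers to them.
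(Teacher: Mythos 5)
Your proposal is correct and follows essentially the same route as the paper: the first identity via the identification $\chi_d\chi_c\sigma_{u_1}\sigma_{u_2}=\pm\,\mu_{v(c)}\mu_{v(d)}$ together with the high-temperature expansion~\eqref{eq:spin-spin-high-temperature} under Kramers--Wannier duality (this is exactly the mechanism behind the cited \cite[Lemma~2.6]{chelkak-hongler-izyurov-15}, phrased in the operator language the paper itself uses for Lemma~\ref{lemma:Phi-values-near-u}), and the second by locating the defect of~\eqref{eq:massive-harmonicity} in the fact that the two contributions of the opposite corner $d'$ live on different lifts over~$[G;u_1,u_2]$ and hence no longer cancel. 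The remaining sign and the $(\sin\theta)^2$ coefficient are, as you note and as the paper also leaves implicit, a finite local computation depending on the chosen square roots and lifts.
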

\begin{proof} The first claim easily follows from the definition of~$\Phi_{[G;u_1,u_2]}(c,d)$ and the high-temperature expansion~\eqref{eq:spin-spin-high-temperature}, see~\cite[Lemma~2.6]{chelkak-hongler-izyurov-15}. The second is a computation: the mismatch with~\eqref{eq:massive-harmonicity} is now caused by the fact that the two relevant values of~$\Phi_{[G;u_1,u_2]}(c,d')$ correspond to different lifts of~$d'$ onto~$[G;u_1,u_2]$.
\end{proof}

\begin{remark} Contrary to the correlations of the energy density field, one \mbox{cannot} directly represent the spin expectations~$\mathbb{E}_G[\sigma_{u_1}...\sigma_{u_m}]$ neither as the values of fermionic observables nor as the values of their spinor generalizations. Nevertheless, one can use~\eqref{eq:Phi-cd=ratio-of-spins} to control the change of these expectations when moving the faces~$u_1,...,u_m$ step by step. For the \emph{critical} Ising model, this identity was used in~\cite{chelkak-hongler-izyurov-15} as the starting point to deduce the convergence (when the mesh size tends to zero) of spin expectations to conformally covariant limits from the relevant convergence results for \emph{discrete holomorphic} spinor observables, see Section~\ref{section:conformal-invariance} for further details. At the same time, one can use this idea to perform some explicit computations for the infinite-volume limit of the model, as we will see now.
\end{remark}

\section{Diagonal spin-spin expectations in the full plane}
\label{section:explicit-formulae}

\subsection{Setup and preliminaries} The main purpose of this section is to illustrate the general idea of analyzing the spin-spin expectations via spinor observables discussed in Sections~\ref{subsect:spinors} and~\ref{subsect:spinor-values}. Below we work with the infinite-volume limit of the \emph{homogeneous Ising model on the~$\tfrac{\pi}{4}$-rotated square grid}, % with the mesh size~$2^{\frac{1}{2}}$,
which we denote by~$\Cdiscr$. More precisely, we assume that the centers of faces of~$\Cdiscr$ are located at the points~$(k,s)\in\mathbb{R}^2$ such that~$k,s\in\mathbb{Z}$ and~$k\!+\!s\in 2\mathbb{Z}$.

Recall that (e.g., see~\cite{aizenmann-80}) the 2D Ising model has a unique Gibbs measure above and at the critical temperature, while there are two extremal ones (describing~``$+$'' and~``$-$'' phases, respectively) below criticality. In particular, the infinite-volume limits of the diagonal spin-spin expectations
\[
D_n=D_n(\beta)\;:=\;\mathbb{E}_{\mathbb{C}^\diamond}[\sigma_{(0,0)}\sigma_{(2n,0)}]
%~=~ \lim\nolimits_{G\uparrow\Cdiscr}\mathbb{E}_G [\sigma_{(0,0)}\sigma_{(2n,0)}]
\]
are well defined for all~$\beta$ and invariant under translations. Together with the monotonicity of~$D_n$ with respect to~$\beta$, these are the only external inputs that we use below. Our goal is to derive the following classical results from the (massive) harmonicity of spinor observables and their values given by Lemmas~\ref{lemma:Phi-values-near-u} and~\ref{lemma:Phi-m=2-values}.
\begin{theorem}[see~\cite{mccoy-wu-book-14}]
\label{thm:explcit-formulae} For~$\beta=\beta_{\mathrm{crit}}$, the following explicit formula holds true:
\begin{equation}
\label{eq:Dn-crit-explicit}
D_n ~=~ \biggl(\frac{2}{\pi}\biggr)^{\!\!n}\!\cdot\prod_{l=1}^{n-1}\biggl(1-\frac{1}{4l^2}\biggr)^{\!\!l-n}\sim~2^{\frac{1}{3}}e^{-3\zeta'(-1)}\cdot (2n)^{-\frac{1}{4}}\ \ \text{as}\ \ n\to\infty.
\end{equation}
For~$\beta>\beta_{\mathrm{crit}}$, one has~$\lim_{n\to\infty}D_n=(1-q^4)^{1/4}>0$, where~$q=\tan\theta=(\sinh \beta)^{-1}.$
\end{theorem}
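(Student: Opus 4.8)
The plan is to realize $D_n$ as a value of a full-plane spinor observable and to pin it down by solving the associated discrete boundary value problem explicitly. Fix the two branch faces $u_1=(0,0)$ and $u_2=(2n,0)$ and, following Definition~\ref{def:spinor-observables}, consider $\Phi_n(\cdot):=\Phi_{[\Cdiscr;u_1,u_2]}(c,\cdot)$ with the source corner $c=u_1^{[\eta]}$. Restricting the second argument to a single type of corner makes $\Phi_n$ real-valued and, by the remark following Definition~\ref{def:spinor-observables} together with~\eqref{eq:massive-harmonicity}, massive harmonic on the double cover $[\Cdiscr;u_1,u_2]$ away from exactly two places: at the source $c$, where Lemma~\ref{lemma:Phi-values-near-u} records $\Delta_\theta\Phi_n=(\cos\theta)^2$; and at the corners neighboring $u_2$, where Lemma~\ref{lemma:Phi-m=2-values} records the scattering relation $\Delta_\theta\Phi_n=\pm(\sin\theta)^2\Phi_n(d')$. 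The crucial feature of this setup is that the same two lemmas also identify the values of $\Phi_n$ with the quantities of interest: near $u_1$ Lemma~\ref{lemma:Phi-values-near-u} expresses shifted diagonal correlations $\mathbb{E}[\sigma_{\widetilde u_1}\sigma_{u_2}]/\mathbb{E}[\sigma_{u_1}\sigma_{u_2}]$, and near $u_2$ Lemma~\ref{lemma:Phi-m=2-values} expresses the dual-lattice correlation $\mathbb{E}^*[\sigma_{v(c)}\sigma_{v(d)}]$ divided by $D_n$.

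Next I would use translation invariance and the reflection symmetry of the diagonal configuration to collapse the problem to one dimension: restricted to the lattice axis carrying $u_1$ and $u_2$, the massive harmonic equation~\eqref{eq:massive-harmonicity} degenerates into a three-term recurrence in the position along that axis, with a point source at $u_1$ and the scattering relation of Lemma~\ref{lemma:Phi-m=2-values} at $u_2$. This recurrence is precisely the three-term recurrence of a family of orthogonal polynomials, which is the ``direct link'' advertised in the introduction: $D_n/D_{n-1}$ is read off as a ratio of their squared norms, while moving the branch point $u_1$ two diagonal steps to $(2,0)$ and applying Lemma~\ref{lemma:Phi-values-near-u} (together with $\mathbb{E}[\sigma_{(2,0)}\sigma_{(2n,0)}]=D_{n-1}$) identifies that ratio with a product of two explicitly computed observable values. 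At the self-dual point $\theta=\tfrac{\pi}{4}$ the Kramers--Wannier relation~\eqref{eq:kramers-wannier} identifies the dual correlation appearing in Lemma~\ref{lemma:Phi-m=2-values} with a diagonal correlation again, which closes the system and fixes the recurrence coefficients to be of $(2l\pm1)/(2l)$ type; one then obtains
\[
\frac{D_n}{D_{n-1}}\;=\;\frac{2}{\pi}\prod_{l=1}^{n-1}\frac{(2l)^2}{(2l-1)(2l+1)}\,,
\]
and telescoping this identity yields the closed product in~\eqref{eq:Dn-crit-explicit}, the prefactor $(2/\pi)^n$ being the limiting Wallis product.

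For the asymptotics I would take logarithms in~\eqref{eq:Dn-crit-explicit} and apply Euler--Maclaurin to $\sum_{l=1}^{n-1}(l-n)\log(1-\tfrac{1}{4l^2})$; the subleading terms reorganize into the logarithm of a Barnes $G$/Glaisher--Kinkelin factor, and the identity $\log A=\tfrac{1}{12}-\zeta'(-1)$ produces both the power $(2n)^{-1/4}$ and the constant $2^{1/3}e^{-3\zeta'(-1)}$. For $\beta>\beta_{\mathrm{crit}}$ (that is $\theta<\tfrac{\pi}{4}$) the observable is genuinely massive and so decays exponentially on $\Cdiscr$; letting $n\to\infty$ sends $u_2\to\infty$, the scattering term of Lemma~\ref{lemma:Phi-m=2-values} drops out, and $\Phi_n$ converges to a translation-invariant full-plane massive Green's function whose value at the source, computed from the same recurrence (now with a summable solution), equals $(1-q^4)^{1/4}$.

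The hard part, as in all such full-plane computations, is not the algebra but the analytic justification that the recurrence is solved by the branch realized by the genuine observable. Concretely, one must establish the correct behavior of $\Phi_n$ at infinity --- exponential decay in the subcritical regime, and the right square-root growth around the two branch points at criticality --- so that the massive harmonic spinor with the prescribed singularities is unique and coincides with the combinatorially defined $\Phi_n$. Since the only external inputs are the existence of the infinite-volume Gibbs measures and the monotonicity of $D_n$ in $\beta$, this normalization must be bootstrapped from monotonicity and a comparison of the critical and subcritical solutions rather than assumed; I expect this uniqueness-at-infinity step to be the main obstacle, with the orthogonal-polynomial identification and the Glaisher--Kinkelin asymptotics being comparatively routine once it is in place.
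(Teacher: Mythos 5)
Your overall strategy --- realize $D_n$ through a two-branch-point spinor observable, feed in the source and scattering data of Lemmas~\ref{lemma:Phi-values-near-u} and~\ref{lemma:Phi-m=2-values}, reduce to orthogonal polynomials, and invoke uniqueness of the bounded massive harmonic spinor --- is indeed the paper's strategy, and your ratio $D_n/D_{n-1}$ and the Glaisher--Kinkelin asymptotics are correct. However, two of your steps contain genuine gaps. First, the passage to orthogonal polynomials is not a ``collapse to one dimension'' along the axis joining the branch points. On the $\frac{\pi}{4}$-rotated grid the massive Laplacian~\eqref{eq:massive-harmonicity} couples a corner at $(k,0)$ to its four diagonal neighbours $(k\pm1,\pm1)$, so the on-axis values do not satisfy a closed three-term recurrence in $k$ even after using the reflection symmetry, and the orthogonal polynomials that actually occur are not indexed by position for fixed $n$. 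The paper's mechanism is different: one takes a Fourier transform \emph{along} the axis, solves the resulting three-term recurrence in the \emph{transverse} variable $s$ explicitly (this is where the uniqueness of Remark~\ref{rem:DD*-define-Theta} enters), and then massive harmonicity at the on-axis sites strictly between the branch points becomes the statement that the trigonometric polynomial $\widehat{\Theta}_{n,0}(e^{it})=D_n+\ldots+D_n^*e^{int}$ is orthogonal to $e^{it},\ldots,e^{i(n-1)t}$ with respect to the explicit weight $(1+q^2)(1-(m\cos\frac{t}{2})^2)^{1/2}$. At criticality this identifies $e^{-int/2}\widehat{\Theta}_{n,0}$ with a multiple of the Legendre polynomial $P_n(\cos\frac{t}{2})$, and the two normalizations (leading coefficient $2^nD_n$ from $\Theta_n(0,0)=D_n$, and $\int_{-1}^1P_n(x)x^n\,dx=\pi 2^{-n}D_{n+1}$ from the value of $\Delta_\theta\Theta_n$ at the source) give $D_{n+1}/D_n$ directly; no ``moving of the branch point'' is needed, and the recurrence coefficients of ``$(2l\pm1)/(2l)$ type'' that you posit are not justified by the position-space recurrence you describe, since that recurrence does not close.

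Second, and more seriously, the subcritical limit. Asserting that the scattering term drops out and that $\lim_n D_n$ is ``a Green's function value at the source'' cannot produce $(1-q^4)^{1/4}$: in the natural normalization the observable at the source tends to $\lim_n D_{n+1}/D_n=1$, so the spontaneous-magnetization constant is invisible to any purely local computation and extracting it this way is circular. The paper needs the full OPUC apparatus here: the relations~\eqref{eq:Dn-recurrence-via-cn} express $D_{n+1},D_{n+1}^*$ through the Verblunsky coefficients of the monic polynomials $\Phi_n,\Phi_n^*$ on the unit circle; evaluating the Szeg\"o recurrence at $z=q^2$ yields the telescoping identity $D_{n+1}\Phi_n^*(q^2)+q^2D_{n+1}^*\Phi_n(q^2)=\beta_n\cdots\beta_0$; one then inputs $D_n^*\to0$ --- which comes from the \emph{critical} formula combined with monotonicity in $\beta$, so the two regimes are not independent --- and both Szeg\"o theorems for the weight $|1-q^2e^{it}|$ to conclude $\lim_n D_{n+1}=(1-q^4)^{1/4}$. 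None of this machinery appears in your sketch, and it is precisely the nontrivial content of the second claim. By contrast, the uniqueness-at-infinity issue that you single out as the main obstacle is dispatched in the paper by a one-line maximum-principle argument for bounded symmetric spinors in the upper half-plane.
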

\begin{remark}
Since the famous work of Onsager and Kaufman (see~\cite{baxter-11} for historical remarks) it is known that two-point expectations like~$D_n$ can be expressed via Toeplitz determinants, thus the theory of orthogonal polynomials plays a crucial role for their asymptotic analysis. It is worth noting that below we use a shorter route, applying this theory~\emph{directly} to certain polynomials constructed from the values of relevant spinor observables. We believe that one can use this shortcut to study the properties of~$D_n$ in great detail, cf.~\cite[Section~2]{perk-au-yang-09}.
\end{remark}
\begin{remark}
Similar techniques can be applied for the analysis of one-point expectations (with~``$+$'' boundary conditions) in the \emph{``zig-zag'' half-plane}~$\Cdiscr_-$ by which we mean the collection of all faces~$(-k,s)\in\Cdiscr$ with~$k>0$. For instance, for the critical model one can show that~$\mathbb{E}_{\mathbb{C}^\diamond_-}[\sigma_{(-k,\cdot)}]\cdot \mathbb{E}_{\mathbb{C}^\diamond_-}[\sigma_{(-k+1,\cdot)}]=2^{\frac{1}{2}}D_k$\,. This identity leads to an explicit formula for these expectations similar to~\eqref{eq:Dn-crit-explicit}; see~\cite{chelkak-hongler-16}.
\end{remark}

Below we work with a sequence of real-valued spinor observables
\begin{equation}
\label{eq:Theta-n-ks-def}
\Theta_n(k,s)~:=~ D_{n+1} \cdot \Phi_{[\mathbb{C}^\diamond;(-2,0),(2n,0)]}((-2,0)^{[1]},(k,s)^{[i]})\,,\quad n\ge 0\,,
\end{equation}
%where $k,s\in{\textstyle\mathbb{Z}}$ and $k\!+\!s\in 2\mathbb{Z}$,
which should be understood as limits of the similar quantities defined in finite domains~$G$ exhausting~$\Cdiscr$; we assume that these domains are  symmetric with respect to the horizontal axis. It follows from Definition~\ref{def:spinor-observables} and the high-temperature expansion~(\ref{eq:spin-spin-high-temperature}) that~$\mathbb{E}_G[\sigma_{u_1}\sigma_{u_2}]\cdot|\Phi_{[G;u_1,u_2]}(\cdot,\cdot)|\le 1$, so one can use a diagonal process to define all the values~$\Theta_n(k,s)$ as the limits along some subsequence of~$G$.

\subsection{From full-plane spinor observables to orthogonal polynomials} Note that the full-plane observable~$\Theta_n$ has
the following values on the horizontal axis:
\[
\Theta_n(k,0)=0\ \text{if}~k<0\,,\ \ \Theta_n(0,0)=D_n\,;\quad \Theta_n(2n,0)=D_n^*\,,\ \ \Theta_n(k,0)=0~\text{if}~k>2n\,,
\]
where~$D_n^*$ denotes the diagonal spin-spin expectation at the dual temperature. The values~$\Theta_n(0,0)$ and~$\Theta_n(2n,0)$ are essentially given by~\eqref{eq:Phi-cd=ratio-of-spins} and~\eqref{eq:Phi-cd-next-to-u2}. The first and the last claim follow from Definition~\ref{def:spinor-observables} of the spinor observables $\Phi_{[G;(-2,0),(2n,0)]}((-2,0)^{[1]},(k,0)^{[i]})$: if~$P_\pm$ are two configurations contributing to this value that are symmetric to each other with respect to the horizontal axis, then the signs~$\tau_{[G;(-2,0),(2n,0)]}(P_\pm)$ are the same if~$0\le k\le 2n$ and \emph{opposite} otherwise. More generally, if one cuts the double-cover ~$[\Cdiscr;(-2,0),(2n,0)]$ along the horizontal rays~$(-\infty,-2)$ and~$(2n,+\infty)$, then~$\Theta_n$ obeys the symmetry~$\Theta_n(k,-s)=\Theta(k,s)$ on each of the two sheets obtained from~~$[\Cdiscr;(-2,0),(2n,0)]$ in this way.

\begin{remark}
\label{rem:DD*-define-Theta} Applying the maximum principle in the upper half-plane, one can easily see that a \emph{bounded} spinor~$\Theta_n$ symmetric with respect to the horizontal line is uniquely determined by the massive harmonicity property and its values~$D_n$ and $D_n^*$ at the points~$(0,0)$ and~$(2n,0)$ where this property fails.
\end{remark}

Denote %Let~$\widehat{\Theta}_n(e^{it}):=\sum_{k\in2\mathbb{Z}} e^{\frac{1}{2}ikt}\Theta_n(k,0)$ and more generally
\[
\widehat{\Theta}_{n,s}(e^{it})~:=~{\textstyle\sum_{k\in\mathbb{Z}:k+s\in 2\mathbb{Z}}}\,e^{\frac{1}{2}ikt}\Theta_n(k,s)\,,\quad s\ge 0\,.
\]
In particular,~$\widehat{\Theta}_{n,0}(e^{it})=D_n+\ldots+D_n^*e^{int}$ is a trigonometric \emph{polynomial}.
The massive harmonicity of~$\Theta_n(k,s)$ in the upper half-plane can be now written as
\begin{equation}
\label{eq:Theta-three-terms}
\widehat{\Theta}_{n,s}(e^{it}) -
(\tfrac{m}{2}\cos\tfrac{t}{2})\cdot (\widehat{\Theta}_{n,s-1}(e^{it})+\widehat{\Theta}_{n,s+1}(e^{it}))=0\,,\quad s\ge 1\,,
\end{equation}
where~$m=\sin(2\theta)=2(q\!+\!q^{-1})^{-1}\!$. Further, Lemmas~\ref{lemma:Phi-values-near-u} and~\ref{lemma:Phi-m=2-values} imply
\[%\begin{equation}\label{eq:Delta-at-branchings=}
\Delta_\theta\Theta_n(0,0)=(1\!+\!q^2)^{-1}D_{n+1}\,,\quad \Delta_\theta\Theta_n(2n,0)=(1\!+\!q^2)^{-1}q^2D_{n+1}^*\quad\text{for}\ n\ge 1\,,
\]%\end{equation}
and one can similarly check that $\Delta_\theta\Theta_0(0,0)=(1\!+\!q^2)^{-1}(D_1\!+\!q^2D_1^*)$. Together with the symmetry~$\Theta_n(k,-1)=\Theta_n(k,1)$ discussed above, this allows us to write
\begin{equation}\label{eq:Theta-01}
\widehat{\Theta}_{n,0}(e^{it}) - m\cos\tfrac{t}{2}\cdot \widehat{\Theta}_{n,1}(e^{it}) = (1+q^2)^{-1}\cdot(\,\ldots+D_{n+1}+q^2D_{n+1}^*e^{int}+\ldots\,)\,.
\end{equation}
In particular, this trigonometric series does \emph{not} contain monomials~$e^{it},...,e^{i(n-1)t}$; this fact reflects the massive harmonicity of~$\Theta_n$ between the branching points.

\smallskip

Note that one can reverse the above derivation. Namely, given a polynomial~$Q_n(e^{it})=D_n+\ldots+D_n^*e^{int}$, let us define \emph{uniformly bounded} functions
\[
Q_{n,s}(e^{it}):= \biggl[\frac{1-(1-(m\cos\frac{t}{2})^2)^{\frac{1}{2}}}{m\cos\frac{t}{2}}\biggr]^s Q_n(e^{it})\,,\quad s\ge 0\,,
\]
so that~\eqref{eq:Theta-three-terms} holds true for all~$s\ge 1$. Now, if the Fourier series of the function
\begin{equation}
\label{eq:Q-01}
Q_{n,0}(e^{it}) - m\cos\tfrac{t}{2}\cdot Q_{n,1}(e^{it})=(1-(m\cos\tfrac{t}{2})^2)^{\frac{1}{2}}\cdot Q_n(e^{it})
\end{equation}
does not contain monomials~$e^{it},...,e^{i(n-1)t}$, then~$Q_{n,s}$ must coincide with~$\widehat{\Theta}_{n,s}$ due to the uniqueness property of the full-plane observable~$\Theta_n$ described in Remark~\ref{rem:DD*-define-Theta}.

\subsection{Proof of Theorem~\ref{thm:explcit-formulae}} Following the preceding discussion, we are now looking for a trigonometric polynomial~$Q_n(e^{it})=D_n+\ldots+D_n^*e^{int}$, which is orthogonal to all the monomials~$e^{it},...,e^{i(n-1)t}$ with respect to the measure
$\tfrac{1}{2\pi}w(e^{it}){dt}$ on the unit circle, where the \emph{real weight}~$w(e^{it})$ is given by
\[
w(e^{it})=w(e^{-it})\;:=\;(1\!+\!q^2)\cdot(1\!-\!(m\cos\tfrac{t}{2})^2)^{\frac{1}{2}}\,.
\]
For the self-dual value~$\beta=\beta_{\mathrm{crit}}$ we have~$D_n=D_n^*$ and~$q=m=1$, which means~$w(e^{it})=2|\sin\tfrac{t}{2}|$. The above orthogonality condition is now guaranteed if
\[
e^{-\frac{int}{2}}Q_n(e^{it})~=~2D_n\cos\tfrac{nt}{2}+\ldots ~=~ %2^nD_n\cdot(\cos\tfrac{t}{2})^n+\ldots~=~
P_n(\cos\tfrac{t}{2})\,,
\]
and $\int_{-1}^1P_n(x)x^ldx=0$ for all~$l<n$. In other words,~$P_n(x)=2^nD_nx^n+\ldots$ must be proportional to the~$n$-th Legendre polynomial~$(2^nn!)^{-1}d[(x^2\!-\!1)^n]/dx^n$. Moreover, it follows from~(\ref{eq:Q-01}) and~\eqref{eq:Theta-01} that
\[
\textstyle \int_{-1}^1P_n(x)x^ndx ~=~ \frac{1}{4}\int_0^{2\pi}(2D_{n+1}\cos\frac{nt}{2}+\ldots)(\cos\frac{t}{2})^ndt~=~ \pi 2^{-n}D_{n+1}\,.
\]
Using the well-known expression for the norms of Legendre polynomials, we conclude that
$\pi 2^{-2n}D_{n+1}/D_n= ((2n\!-\!1)!!/n!)^{-2}\cdot 2/(2n\!+\!1)$, which leads to~\eqref{eq:Dn-crit-explicit}.

\smallskip

The subcritical case~$\beta>\beta_{\mathrm{crit}}$ is slightly more involved. Clearly, we should have~$Q_n(e^{it})=c_n\Phi_n(e^{it})+c_n^*\Phi_n^*(e^{it})$, where~$\Phi_n(z)=z^n+\ldots$ is the~$n$-th monic orthogonal polynomial and~$\Phi_n^*(z)=z^n\Phi_n(z^{-1})$; see~\cite[Section~2]{simon-opuc-05} for the notation and basic facts about orthogonal polynomials on the unit circle. For~$n=0$, we simply have~$Q_0(e^{it})=1$ and the Fourier expansion~\eqref{eq:Theta-01} implies
\begin{equation}
\label{eq:D1+^D1*=}
\textstyle D_1+q^2D_1^*\;=\;\beta_0:=\|1\|^2=\frac{1}{2\pi}\int_0^{2\pi}w(e^{it})dt\,.
\end{equation}
For~$n\ge 1$, considering the free term and the highest monomial of~$Q_n(e^{it})$ and using the Fourier expansion~\eqref{eq:Theta-01} of the product~$w(e^{it})Q_n(e^{it})$ we find
\begin{equation}
\label{eq:Dn-recurrence-via-cn}
\begin{array} {lr}
D_n=c_n^*-\alpha_{n-1}c_n\,, &  D_{n+1}=c_n^*\|\Phi_n^*\|^2=c_n^*\beta_n\,, \\
D_n^*=c_n-\alpha_{n-1}c_n^*\,, &\ \ q^2D_{n+1}^*=c_n\|\Phi_n\|^2=c_n\beta_n\,,
\end{array}
\end{equation}
where~$\alpha_{n-1}=\overline{\alpha}_{n-1}:=-\Phi_n(0)$ and $\beta_n:=\|\Phi_n\|^2\!=\|\Phi_n^*\|^2\!=\beta_0\prod_{l=0}^{n-1}(1-\alpha_l^2)$. This allows us to express~$D_{n+1}$ and~$D_{n+1}^*$ via~$D_n$ and~$D_n^*$ for~$n\ge 1$ but unfortunately we cannot extract individual values of~$D_1$ and~$D_1^*$ from~\eqref{eq:D1+^D1*=}. Nevertheless, we can combine~\eqref{eq:Dn-recurrence-via-cn} with the Szeg\"o recurrence relations for the polynomials~$\Phi_n(z)$ and~$\Phi_n^*(z)$ applied at the point~$z=q^2<1$ and obtain the following identity:
\begin{align*}
D_{n+1}\Phi_n^*(q^2)+q^2D_{n+1}^*\Phi_n(q^2) & ~=~ \beta_n\cdot (D_n\Phi_{n-1}^*(q^2)+q^2D_n^*\Phi_{n-1}(q^2)) \\
& ~=~ \dots ~=~ \beta_{n}...\beta_1\cdot (D_1+q^2D_1^*) ~=~ \beta_{n}...\beta_1\beta_0\,.
\end{align*}
The first Szeg\"o theorem~(e.g., see~\cite[Theorems~8.1,8.4]{simon-opuc-05}) implies that~$\beta_n\to (D(0))^2$ and~$\Phi_n^*(q^2)\to D(0)/D(q^2)$ as~$n\to\infty$, where the inner function~$D(z)\!=\!(1-q^2z)^{\frac{1}{2}}$ satisfies~$|D(e^{it})|^2\!=\!w(e^{it})$. Since~$D(0)\!=\!1$, the values~$\Phi_n(q^2)$ are bounded. We know from~\eqref{eq:Dn-crit-explicit} and the monotonicity of~$D_n$  with respect to~$\beta$ that~$D_{n+1}^*\to 0$. Therefore, the second Szeg\"o theorem (e.g., see~\cite[Theorems~8.5]{simon-opuc-05}) gives
\begin{align*}
\lim_{n\to \infty} D_{n+1} = \frac{\lim_{n\to\infty} \beta_n...\beta_1\beta_0}{\lim_{n\to\infty} \Phi_n^*(q^2)} = D(q^2)\exp\biggl[\frac{1}{\pi}\iint_{\mathbb{D}}\,\biggl|\frac{D'(z)}{D(z)}\biggr|^2\!dA(z)\biggr] = (1-q^4)^{\frac{1}{4}}\,.
\end{align*}

\section{Convergence and conformal invariance at criticality}\label{section:conformal-invariance}

In this section we consider the critical Ising model defined on a sequence of discrete approximations to a given bounded planar domain~$\Omega$. For simplicity, below we discuss ``$+$'' boundary conditions only and assume that~$\Omega$ is simply connected; see~\cite{chelkak-hongler-izyurov-16} for the general setup. We denote by~$\Omega_\delta$ a discrete approximation to~$\Omega$ (in Hausdorff or Carath\'eodory sense) on the~\emph{$\frac{\pi}{4}$-rotated square grid~$\Cdiscr_\delta:=\delta\Cdiscr$} with the \emph{mesh size~$2^{\frac{1}{2}}\delta$}. The main object of interest is the asymptotic behaviour of correlation functions such as~$\mathbb{E}_{\Omega_\delta}[\sigma_{u_1}...\sigma_{u_m}]$ in the regime when the points~\mbox{$u_1,...,u_m\in\Omega$} are fixed and~$\delta\to0$, so that the numbers of lattice steps separating these points from each other (and from the boundary of~$\Omega$) are all proportional to~$\delta^{-1}\!\to\infty$. We call this regime a \emph{scaling limit} of the critical Ising model on~$\Omega$; note that one can similarly treat fermionic observables~$\Corr{\widehat{\mathrm{K}}}{\phi_{e_1}...\phi_{e_{2k}}}$ discussed in Section~\ref{subsect:fermions}, spin-disorder correlators~$\Corr{\Omega_\delta}{\mu_{v_1}...\mu_{v_{2n}}\sigma_{u_1}...\sigma_{u_m}}$ from Section~\ref{subsect:spin-disorder}, etc.

\smallskip

In the physics literature (e.g., see~\cite{mussardo-book-10}), the 2D Ising model is considered to be an archetypical example of a discrete system whose large-distance behavior at criticality is prescribed by Conformal Field Theory (with the central charge~$\frac{1}{2}$). In particular, this gives a number of predictions for the scaling limits of correlation functions discussed above, often leading to exact formulae for them. For instance, the CFT counterparts~$\Corr\Omega{\sigma_{u_1}...\sigma_{u_m}}$ of the multi-point spin expectations~$\mathbb{E}_{\Omega_\delta}[\sigma_{u_1}...\sigma_{u_m}]$ have the following explicit form in the upper half-plane~$\mathbb{H}$:
\begin{equation}
\label{eq:spins-explicit}
\Corr{\mathbb{H}}{\sigma_{u_1}...\sigma_{u_m}}= \prod_{p=1}^m(2\mathrm{Im}\,u_p)^{-\frac{1}{8}}\cdot \Biggl[2^{-\frac{m}{2}}\!\!\!\sum_{s\in\{\pm 1\}^m}\prod_{1\le p<q\le m}\biggl|\frac{u_p-u_q}{u_p-\overline{u}_q}\biggr|{\vphantom{\Bigr|}}^{\frac{s_ps_q}{2}}\,\Biggr]^{\frac{1}{2}}
\end{equation}
and are defined in all other simply connected domains~$\Omega$ by the following~\emph{conformal covariance} property under conformal mappings~$\varphi:\Omega\to\Omega'$:
\begin{equation}
\label{eq:covariance-spins}
\textstyle \Corr{\Omega}{\sigma_{u_1}...\sigma_{u_m}}=\Corr{\Omega'}{\sigma_{\varphi(u_1)}...\sigma_{\varphi(u_m)}}\cdot\prod_{1\le p\le m} |\varphi'(u_p)|^{\frac{1}{8}}\,.
\end{equation}
A simpler example is the multi-point correlations of \emph{holomorphic fermions}
\begin{equation}
\label{eq:covariance-fermions}
\textstyle \Corr{\Omega}{\psi_{z_1}...\psi_{z_{2k}}}=\Corr{\Omega'}{\psi_{\varphi(z_1)}...\psi_{\varphi(z_{2k})}}\cdot\prod_{1\le p\le 2k} (\varphi'(z_{2k}))^{\frac{1}{2}},
\end{equation}
which are the CFT counterparts of the fermionic observables~$\Corr{\widehat{\mathrm{K}}}{\psi(z_{e_1})...\psi(z_{e_{2k}})}$ discussed in Section~\ref{subsect:fermions}. In this case, one has~$\Corr{\mathbb{H}}{\psi_{z_1}...\psi_{z_{2k}}} = \Pf[(z_p\!-\!z_q)^{-1}]_{p,q=1}^{2k}$, thus confirming the ``free fermion'' nature of the corresponding field theory.

\smallskip

It is worth noting that Conformal Field Theory assumes existence and conformal covariance of correlation functions in continuum as an axiom, not addressing the proof of convergence of their discrete prototypes as~$\delta\to 0$. In the last several years, such convergence results have been rigorously established for all the primary fields in the Ising model: fermions and energy-densities~\cite{chelkak-smirnov-12,hongler-smirnov-13,hongler-thesis-10}, spins~\cite{chelkak-hongler-izyurov-15}, as well as disorders and mixed correlation of these fields~\cite{chelkak-hongler-izyurov-16}. This progress is based on convergence results for discrete holomorphic observables introduced in Section~\ref{subsect:fermions} (fermions) and Section~\ref{subsect:spinors} (spinors), which are thought of as solutions to discretizations  of special Riemann-type boundary value problems described below.

\subsection{Fermionic observables} \label{subsect:fermions-convergence} Recall that, given an oriented edge~$a$ of~$\Omega_\delta$ and a midpoint of (another) edge of~$\Omega_\delta$, we denote by~$F_{\Omega_\delta}(a,z_e)=\Corr{\widehat{\mathrm{K}}}{\psi(z_e)t_a\phi_a}$ the basic discrete holomorphic observables introduced in Definition~\ref{def:observables-edges}. %for their combinatorial representations.
It was shown in~\cite{smirnov-06icm,smirnov-10,chelkak-smirnov-12,hongler-smirnov-13} that such functions can be thought of as unique solutions to discrete boundary value problems whose continuous counterparts we now describe.

\begin{definition}
\label{def:feta}
Given a planar domain~$\Omega$, a point~$a\in\Omega$ and a complex number~$\eta$, we denote by~$f^{[\eta]}_\Omega(a,\cdot)$ the unique function holomorphic in~$\Omega\setminus\{a\}$ such that~$f^{[\eta]}(a,z)={\overline{\eta}}\cdot (z\!-\!a)^{-1}\!+O(1)$ as~$z\to a$ and~$\mathrm{Im}\bigl[f^{[\eta]}_\Omega(a,\zeta)(\tau(\zeta))^{\frac{1}{2}}\bigr]=0$ for all~$\zeta\in\partial\Omega$, where~$\tau(\zeta)$ denotes the (counterclockwise) tangent to~$\partial\Omega$ at~$\zeta$.
\end{definition}

Note that the above boundary conditions can be reformulated as the Dirichlet ones for the harmonic function~$h(z):=\mathrm{Im}\bigl[\int(f_\Omega^{[\eta]}(a,z))^2dz\bigr]$. For~$\eta\!=\!0$, the maximum principle gives~$h\!=\!0$, which justifies the uniqueness for all~$\eta\!\in\!\mathbb{C}$ and also implies that~$f_\Omega^{[\eta]}(a,\cdot)$ depends on~$\eta$ in a \emph{real-linear} manner.
Moreover, one has
\begin{equation}
\label{eq:fermions-in-continuum-def}
f^{[\eta]}_\Omega(a,z) \;=\; {\tfrac{1}{2}(\overline{\eta}f_\Omega(a,z)+ \eta f_\Omega^\star(a,z))\,,}\quad \begin{array}{rcl}f_\Omega(a,z)&\!\!=\!\!&-f_\Omega(z,a)\,,\\
f_\Omega^{{\star}}(a,z)&\!\!=\!\!&-\overline{f_\Omega^{{\star}}(z,a)}\,, \end{array}
\end{equation}
where the function~$f_\Omega(a,z)$ is holomorphic in both variables and has the singularity~$(z\!-\!a)^{-1}$ on the diagonal~$z\!=\!a$, the function~$f_\Omega^{{\star}}(a,z)$ is holomorphic in~$z$, anti-holomorphic in~$a$ and continuous up to~$z\!=\!a$, and~$f_\Omega^{{\star}}(a,\zeta)=\overline{\tau(\zeta)f(a,\zeta)}$ for~$\zeta\in\partial\Omega$; {see~\cite{chelkak-hongler-izyurov-16} for details.} From these properties it is easy to conclude that
\[
\begin{array}{ll}
f_{\Omega}(a,z)=f_{\Omega'}(\varphi(a),\varphi(z))\cdot(\varphi'(a)\varphi'(z))^{\frac{1}{2}},\quad& f_{\mathbb{H}}(a,z)={2}(z-a)^{-1},\cr
f^{{\star}}_{\Omega}(a,z)=f^{{\star}}_{\Omega'}(\varphi(a),\varphi(z))\cdot(\overline{\varphi'(a)}\varphi'(z))^{\frac{1}{2}},& f^{{\star}}_{\mathbb{H}}(a,z)={2}(z-\overline{a})^{-1}
\end{array}
\]
for conformal maps~$\varphi:\Omega\to\Omega'$.
%, while~$f_{\mathbb{H}}(a,z)={2}(z-a)^{-1}$, $f^{{\star}}_{\mathbb{H}}(a,z)={2}(z-\overline{a})^{-1}$.

\smallskip

The next theorem was proved in~\cite{hongler-smirnov-13} using techniques from~\cite{chelkak-smirnov-12} (one can drop smoothness assumptions on~$\partial\Omega$ adapting a more robust scheme of the proof from~\cite[Section~3.4]{chelkak-hongler-izyurov-15}). This result also holds true in the isoradial setup, ad verbum.

\begin{theorem}
\label{thm:fermions-convergence} Let two edges~$a_\delta$ and~$e_\delta$ of~$\Omega_\delta$ approximate distinct inner points~$a$ and~$z$ of~$\Omega$, and $\eta=\eta_{a_\delta}$ denote the square root of the direction of~$a_\delta$. One has
\[
\delta^{-1}\cdot F_{\Omega_\delta}(a_\delta,z_{e_\delta}) ~\to~ \tfrac{2}{\pi}f^{[\eta]}_\Omega(a,z)\quad \text{as}\ \ \delta\to 0\,.
\]
\end{theorem}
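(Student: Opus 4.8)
The plan is to follow the strategy of Smirnov and of Chelkak--Smirnov: realize the discrete observable $z_{e}\mapsto F_{\Omega_\delta}(a_\delta,z_e)$ as the solution of a discrete Riemann-type boundary value problem, establish a priori estimates that give precompactness of the rescaled family, and then pin down every subsequential limit by the continuum uniqueness statement following Definition~\ref{def:feta}. I would begin by collecting the two structural inputs already recorded in the excerpt. On the one hand, $F_{\Omega_\delta}(a_\delta,\cdot)$ together with its companion $\Phi_{\Omega_\delta}(a_\delta,\cdot)$ on corners satisfies the critical s-holomorphicity relation~\eqref{eq:s-hol-condition} at every mid-edge and corner away from $a_\delta$; at $\theta=\frac{\pi}{4}$ this is the strong form of the discrete Cauchy--Riemann equations. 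On the other hand, with ``$+$'' boundary conditions the same relation at a boundary edge forces the projection $\mathrm{Im}\bigl[F_{\Omega_\delta}(a_\delta,\zeta)\cdot(\tau(\zeta))^{\frac{1}{2}}\bigr]$ onto the discrete boundary tangent to vanish along $\partial\Omega_\delta$, the lattice counterpart of the boundary condition in Definition~\ref{def:feta}. These two facts identify $F_{\Omega_\delta}(a_\delta,\cdot)$ as the (unique) solution of a discrete boundary value problem.

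The heart of the argument is the discrete primitive $H_\delta$ of $\mathrm{Im}\bigl[(F_{\Omega_\delta})^2\,dz\bigr]$, the lattice analogue of the harmonic function $h(z)=\mathrm{Im}\bigl[\int (f^{[\eta]}_\Omega(a,z))^2dz\bigr]$ appearing after Definition~\ref{def:feta}. Precisely because of s-holomorphicity the relevant discrete one-form is closed, so $H_\delta$ is well defined up to an additive constant on faces and vertices, and the pairing of $F$ on mid-edges with $\Phi$ on corners makes $H_\delta$ subharmonic on one sublattice and superharmonic on the other. The boundary relation above makes $H_\delta$ locally constant along $\partial\Omega_\delta$, while the prescribed insertion of $t_a\phi_a$ from Definition~\ref{def:observables-edges} pins the discrete residue of $F_{\Omega_\delta}$ at $a_\delta$ and hence the logarithmic behaviour of $H_\delta$ near $a$. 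Combining the sub/super-harmonicity of $H_\delta$ with its controlled boundary data and its controlled singularity, a discrete maximum principle together with Harnack-type estimates yields uniform boundedness and equicontinuity of $\delta^{-1}F_{\Omega_\delta}(a_\delta,\cdot)$ on compact subsets of $\Omega\setminus\{a\}$, hence precompactness of the rescaled family in the topology of uniform convergence on such sets.

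I would then identify every subsequential limit $g$. Discrete holomorphicity passes to the limit, so $g$ is holomorphic in $\Omega\setminus\{a\}$; the vanishing boundary projection survives the limit, so $\mathrm{Im}\bigl[g(\zeta)(\tau(\zeta))^{\frac{1}{2}}\bigr]=0$ for $\zeta\in\partial\Omega$; and matching the discrete residue at $a_\delta$ with the continuum Cauchy kernel gives $g(z)=\overline{\eta}(z-a)^{-1}+O(1)$ up to an overall normalization. By the uniqueness statement after Definition~\ref{def:feta} (obtained by applying the maximum principle to $h$, which forces $h\equiv 0$ when $\eta=0$), $g$ must be a fixed multiple of $f^{[\eta]}_\Omega(a,z)$. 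The constant is read off from the half-plane normalizations $f_{\mathbb{H}}(a,z)=2(z-a)^{-1}$ and $f^{\star}_{\mathbb{H}}(a,z)=2(z-\overline{a})^{-1}$ entering~\eqref{eq:fermions-in-continuum-def}, which fixes the prefactor $\tfrac{2}{\pi}$. As the limit is independent of the subsequence, the full family converges.

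The main obstacle is the second step: proving the a priori estimates uniformly in $\delta$ and showing that $H_\delta$ genuinely converges to a harmonic function with the correct Dirichlet data. This is delicate near the source $a$, where the subharmonic and superharmonic pieces of $H_\delta$ must be reconciled with the prescribed singularity, and near $\partial\Omega$, where for rough boundaries one cannot rely on smoothness and must instead invoke the more robust regularity scheme of~\cite{chelkak-hongler-izyurov-15}. Tracing the lattice Cauchy kernel through to the precise constant $\tfrac{2}{\pi}$ is the other point that requires care.
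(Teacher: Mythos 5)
Your proposal follows exactly the strategy that the paper invokes for this theorem: the text gives no proof of its own but cites \cite{hongler-smirnov-13} and \cite{chelkak-smirnov-12}, whose argument is precisely the scheme you describe (s-holomorphicity, the discrete primitive $H_\delta$ of $\mathrm{Im}[F^2\,dz]$ with its sub/superharmonic decomposition on the two sublattices, boundary and singularity control, precompactness, and identification of subsequential limits via the uniqueness of the continuum Riemann-type boundary value problem), including the remark that rough boundaries require the more robust scheme of \cite{chelkak-hongler-izyurov-15}. The sketch is correct and consistent with the intended proof.
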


\begin{remark}
\label{rem:psi-opsi-convergence} Similarly to the {notation}~$\psi(z_a)= {t_a\cdot (\eta_a\phi_a+\eta_{\bar{a}}\phi_{\bar{a}})}$ introduced in Definition~\ref{def:observables-edges}, set~$\opsi(z_a):= {t_a\cdot (\overline{\eta}{}_a\phi_a+\overline{\eta}{}_{\bar{a}}\phi_{\bar{a}})}$. Using~\eqref{eq:fermions-in-continuum-def}, one can rewrite the statement of Theorem~\ref{thm:fermions-convergence} as
\begin{align*}
\delta^{-1}\cdot\Corr{\widehat{\mathrm{K}}}{\psi(z_e)\psi(z_a)}&~=~ \delta^{-1}\cdot({\eta_a}F_{\Omega_\delta}(a,z_e)+{\eta_{\bar{a}}}F_{\Omega_\delta}(\overline{a},z_e)) ~\to~\tfrac{2}{\pi}f_\Omega(a,z)\,,\\
\delta^{-1}\cdot\Corr{\widehat{\mathrm{K}}}{\psi(z_e)\opsi(z_a)}&~=~ \delta^{-1}\cdot ({ \overline{\eta}{}_a}F_{\Omega_\delta}(a,z_e)+{\overline{\eta}{}_{\bar{a}}}F_{\Omega_\delta}(\overline{a},z_e)) ~\to~\tfrac{2}{\pi}f^{{\star}}_\Omega(a,z)\,.
\end{align*}
 This motivates the following \emph{definition}: $\Corr\Omega{\psi_z\psi_a}:=f_\Omega(a,z)$, $\Corr\Omega{\psi_z\opsi_a}:=f_\Omega^{{\star}}(a,z)$, $\Corr\Omega{\opsi_z\opsi_a}:=\overline{f_\Omega(a,z)}$, which can be further extended to multi-point functions such as~$\Corr\Omega{\psi_{z_1}...\psi_{z_{2k}}}:=\Pf[\Corr\Omega{\psi_{z_p}\psi_{z_q}}]_{p,q=1}^{2k}$\,. Theorem~\ref{thm:fermions-convergence} can be then extended to all the multi-point fermionic correlations discussed in Section~\ref{subsect:fermions}. Note that the conformal covariance~\eqref{eq:covariance-fermions} of these scaling limits appears automatically as an intrinsic property of solutions to the boundary value problems from Definition~\ref{def:feta}.
\end{remark}

\subsection{Energy densities} \label{subsect:energies-convergence} For an edge~$e$ of~$\Omega_\delta$, let~$u^\pm(e)$ denote two faces of~$\Omega_\delta$ incident to~$e$. We
introduce a random variable~$\varepsilon_e$ called the \emph{energy density} on~$e$ as
\begin{equation}
\label{eq:energy-density-def}
\varepsilon_e:=(\sin\theta_e)^{-1}\bigl[\sigma_{u^-(e)}\sigma_{u^+(e)}-\tfrac{\pi-2\theta_e}{\pi\cos\theta_e}\bigr] =(\cos\theta_e)^{-1}\bigl[\tfrac{2\theta_e}{\pi\sin\theta_e}-\mu_{v^-(e)}\mu_{v^+(e)}\bigr]\,,
\end{equation}
where the second equality follows from~\eqref{eq:mu-mu+s-s=1} and the multiplicative normalization is chosen so as to remove the lattice-dependent constants from the results given below when working in the isoradial setup; recall that~$\theta_e=\frac{\pi}{4}$ for the square lattice. The additive counterterms correspond to the infinite-volume limit of the model; see~\cite[Corollary~11]{boutillier-detiliere-11} for their exact values. It is well known that one can express all the expectations~$\mathbb{E}_{\Omega_\delta}[\varepsilon_{e_1}...\varepsilon_{e_k}]$ using discrete fermionic observables discussed above (see~Definition~\ref{def:observables-edges} and Remark~\ref{rem:psi-opsi-convergence}). For instance, for~$k=1$ one has
\begin{equation}
\label{eq:energy-via-fermions}
{\tfrac{i}{2}}\Corr{\widehat{\mathrm{K}}}{\psi(z_e)\opsi(z_e)} ~=~ { i\eta_e\overline{\eta}{}_{\bar{e}}}\Phi_{\Omega_\delta}(\overline{e},e)~=~ %t_e^2\cdot \mathbb{E}_{\Omega_\delta}[\tfrac{1}{2}(1+\sigma_{u^-(e)}\sigma_{u^+(e)})]~=~~
\varepsilon_e^\infty+\mathbb{E}_{\Omega_\delta}[\varepsilon_e]\,,
\end{equation}
where~$\varepsilon_e^\infty=(\sin\theta_e)^{-1}[1+\tfrac{\pi-2\theta_e}{\pi\cos\theta_e}]$. The next result was proved by Hongler and Smirnov~\cite{hongler-smirnov-13} for~$k=1$ and later extended by Hongler~\cite{hongler-thesis-10} to all~$k\ge 1$ (for the square grid case, the generalization to isoradial graphs is straightforward).
\begin{theorem}
\label{thm:energies-convergence} Let a collection of edges~$e_1$,...,$e_k$ of~$\Omega_\delta$ approximate distinct inner points~$z_1,...,z_k$ of a domain~$\Omega$ as~$\delta\to 0$. Then the following is fulfilled:
\[
\delta^{-k}\cdot \mathbb{E}_{\Omega_\delta}[\varepsilon_{e_1}...\varepsilon_{e_k}] ~\to~(\tfrac{2}{\pi})^k\cdot\Corr{\Omega}{\varepsilon_{z_1}...\varepsilon_{z_k}}\quad \text{as}\ \ \delta\to 0\,,
\]
where~$\Corr{\Omega}{\varepsilon_{z_1}...\varepsilon_{z_k}}:=i^k\Corr{\Omega}{\psi_{z_1}\opsi_{z_1}...\psi_{z_k}\opsi_{z_k}}$ and the latter function is defined as the Pfaffian of the corresponding two-point fermionic correlators, see Remark~\ref{rem:psi-opsi-convergence}. In particular, one has the following covariance rule under conformal maps~$\varphi:\Omega\to\Omega'$:
\begin{equation}
\label{eq:covariance-energies} \textstyle \Corr{\Omega}{\varepsilon_{z_1}...\varepsilon_{z_k}}~=~ \Corr{\Omega'}{\varepsilon_{\varphi(z_1)}...\varepsilon_{\varphi(z_k)}}\cdot\prod_{1\le p\le k} |\varphi'(u_p)|\,.
\end{equation}
\end{theorem}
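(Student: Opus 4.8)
The plan is to reduce the multi-point energy-density convergence to the already-established convergence of two-point fermionic correlators, using the exact algebraic relation~\eqref{eq:energy-via-fermions} together with the Pfaffian structure built into the model. First I would recall that, by definition, $\varepsilon_e$ is an affine function of $\sigma_{u^-(e)}\sigma_{u^+(e)}$ (equivalently of $\mu_{v^-(e)}\mu_{v^+(e)}$), so each product $\varepsilon_{e_1}\dots\varepsilon_{e_k}$ expands into a sum of correlators of energy operators with the additive counterterms subtracted. The key structural input is that, via the Grassmann formalism of Section~\ref{subsect:fermions}, the centered quantities $\mathbb{E}_{\Omega_\delta}[\varepsilon_{e_1}\dots\varepsilon_{e_k}]$ can be written exactly as a Pfaffian $\Pf[\,(\tfrac{i}{2})\Corr{\widehat{\mathrm{K}}}{\psi(z_{e_p})\opsi(z_{e_q})}\,]$ (up to the infinite-volume normalizations $\varepsilon_e^\infty$ already incorporated in~\eqref{eq:energy-via-fermions}). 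This is the discrete analogue of the definition $\Corr{\Omega}{\varepsilon_{z_1}\dots\varepsilon_{z_k}}:=i^k\Corr{\Omega}{\psi_{z_1}\opsi_{z_1}\dots\psi_{z_k}\opsi_{z_k}}$, so the whole theorem becomes a statement about passing to the limit inside a Pfaffian of size $2k$.

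Granting this Pfaffian identity, the next step is to control the individual entries. For the off-diagonal entries $\tfrac{i}{2}\Corr{\widehat{\mathrm{K}}}{\psi(z_{e_p})\opsi(z_{e_q})}$ with $p\ne q$, the points $z_p$ and $z_q$ stay at macroscopic distance, so I would invoke the two-point convergence of Remark~\ref{rem:psi-opsi-convergence} (itself a consequence of Theorem~\ref{thm:fermions-convergence}) to get $\delta^{-1}\Corr{\widehat{\mathrm{K}}}{\psi(z_{e_p})\opsi(z_{e_q})}\to\tfrac{2}{\pi}f_\Omega^{\star}(z_p,z_q)$, and analogously for the $\psi\psi$ and $\opsi\opsi$ pairings. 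The diagonal entries are the delicate ones: when $p=q$ the two fermions collide and~\eqref{eq:energy-via-fermions} shows that the coincidence-limit value $\tfrac{i}{2}\Corr{\widehat{\mathrm{K}}}{\psi(z_e)\opsi(z_e)}$ equals $\varepsilon_e^\infty+\mathbb{E}_{\Omega_\delta}[\varepsilon_e]$, where the divergent lattice constant $\varepsilon_e^\infty$ must be exactly the subtraction prescribed in the definition of $\varepsilon_e$. Hence the renormalized diagonal entry converges to a finite limit equal to the regularized value $\tfrac{2}{\pi}\Corr{\Omega}{\psi_z\opsi_z}$ appearing in the continuous Pfaffian; this is precisely where the carefully chosen additive counterterms in~\eqref{eq:energy-density-def} earn their keep.

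With every entry of the $2k\times 2k$ antisymmetric matrix converging after multiplication by $\delta^{-1}$, the final step is purely algebraic: the Pfaffian is a polynomial in the entries, so $\delta^{-k}\,\mathbb{E}_{\Omega_\delta}[\varepsilon_{e_1}\dots\varepsilon_{e_k}]$ converges to the Pfaffian of the limiting matrix, which is $(\tfrac{2}{\pi})^k\Corr{\Omega}{\varepsilon_{z_1}\dots\varepsilon_{z_k}}$ by definition. The conformal covariance~\eqref{eq:covariance-energies} then follows for free: since $\Corr{\Omega}{\varepsilon_{z_1}\dots\varepsilon_{z_k}}$ is a Pfaffian built from $f_\Omega$ and $f_\Omega^{\star}$, and those two-point objects transform with the factors $(\varphi'(a)\varphi'(z))^{1/2}$ and $(\overline{\varphi'(a)}\varphi'(z))^{1/2}$ recorded just after~\eqref{eq:fermions-in-continuum-def}, each diagonal block contributes the modulus $|\varphi'(z_p)|$, and the product over the $k$ points yields the stated weight $\prod_p|\varphi'(z_p)|$.

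The main obstacle I expect is the diagonal-entry analysis, i.e.\ justifying the exchange of the coincidence limit $z_{e_q}\to z_{e_p}$ with the scaling limit $\delta\to 0$ and verifying that the subtracted constant $\varepsilon_e^\infty$ matches the continuous regularization. This cannot be read off from Theorem~\ref{thm:fermions-convergence} directly, because that theorem is stated only for \emph{distinct} inner points; one needs the finer information that $F_{\Omega_\delta}(a_\delta,z_{e_\delta})$ converges not just pointwise but together with its discrete derivative near the diagonal, so that the short-distance singularity of $f_\Omega$ is captured correctly. I would handle this either by appealing to the coincidence-limit estimates underlying the energy-density result of~\cite{hongler-smirnov-13} for $k=1$, or by refining the convergence of $F_{\Omega_\delta}$ to a $C^1$-type statement in a fixed neighborhood of $z_p$; the bookkeeping of the lattice-dependent constant $\varepsilon_e^\infty$ against $\tfrac{2}{\pi}\Corr{\Omega}{\psi_z\opsi_z}$ is the technical heart of the argument.
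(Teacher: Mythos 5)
Your proposal follows essentially the same route as the paper's own (sketched) argument: reduce the multi-point energy correlations to a Pfaffian of two-point fermionic observables via~\eqref{eq:energy-via-fermions}, use Theorem~\ref{thm:fermions-convergence} for the off-diagonal entries, and isolate the coincidence limit $z_e=z_a$ as the technical heart, resolved by subtracting the explicit infinite-volume counterterm $\varepsilon_e^\infty$ so that the renormalized observable becomes discrete holomorphic near the singularity and its on-diagonal convergence follows from bulk convergence. You correctly identify the same bottleneck and the same fix that the paper attributes to~\cite{hongler-smirnov-13} and~\cite{hongler-thesis-10}, so no substantive difference to report.
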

\begin{remark}
According to~\eqref{eq:energy-via-fermions}, in order to prove Theorem~\ref{thm:energies-convergence} one should strengthen Theorem~\ref{thm:fermions-convergence} and analyze the scaling limit of the discrete fermionic observables~$\Corr{\widehat{\mathrm{K}}}{\psi(z_e)\opsi(z_a)}= {t_a\cdot} ({\overline{\eta}{}_a}F_{\Omega_\delta}(a,z_e)+{\overline{\eta}{}_{\bar{a}}}F_{\Omega_\delta}(\overline{a},z_e))$ for~$z_e=z_a$. Contrary to its continuous counterpart, this function is {not} fully discrete holomorphic: after a proper adjustment of its value at~$z_a$, all discrete contour integrals around vertices of~$\Omega_\delta$ vanish, but the ones around two nearby faces~$u^{\pm}(a)$, having opposite signs, do~not. Subtracting an \emph{explicit} counterterm  corresponding to the infinite-volume limit (which scales  as~$\delta^2$ outside of the vicinity of~$a$ and so disappears as~$\delta\to 0$), one obtains a function discrete holomorphic near~$z_a$, for which the convergence at~$z_e=z_a$ can be derived from the convergence in the bulk of~$\Omega_\delta$.
\end{remark}

\subsection{Spinor observables and spatial derivatives of spin correlations} \label{subsect:spinors-convergence} We now move on to the scaling limits of spinor observables~$F_{[\Omega_\delta;u_1,...,u_m]}(u_1^{[\eta]},z_e)$, which are of crucial importance for the analysis of spin correlations due to Lemma~\ref{lemma:Phi-values-near-u}.

\begin{definition} \label{def:g} Given a planar domain~$\Omega$ and a collection~$u_1,...,u_m\in\Omega$ of its distinct inner points, we denote by~$g_{[\Omega;u_1,...,u_m]}(\cdot)$ the unique holomorphic spinor defined on the double-cover~$[\Omega;u_1,...,u_m]$ of~$\Omega$ branching over~$u_1,...,u_m$ that satisfies the
following conditions:~$g_{[\Omega;u_1,...,u_m]}(z)=(z-u_l)^{-\frac{1}{2}}[c_l+O(z-u_l)]$ as~$z\to u_l$, where~$c_1={ e^{-i\frac{\pi}{4}}}$,~$c_2,...,c_m\in{e^{i\frac{\pi}{4}}}\mathbb{R}$, and~$\mathrm{Im}[g_{[\Omega;u_1,...,u_m]}(\zeta)(\tau(\zeta))^{\frac{1}{2}}]=0$ for~$\zeta\in\partial\Omega$.
\end{definition}

\begin{remark}
\label{rem:spinor-covariance} Note that we slightly abuse the notation since~$u_1$ plays a special role in the above definition. The uniqueness of~$g_{[\Omega;u_1,...,u_m]}(z)$ follows from the fact that the similar problem with~\mbox{$c_1=0$} has no nontrivial solution~$g(z)$: the harmonic function~$h(z):=\mathrm{Im}[\int (g(z))^2dz]$ should be bounded near~$u_1$ and bounded from above near~$u_2,...,u_m$, which is in contradiction with the (fixed) sign of its normal derivative on~$\partial\Omega$.
%As was first pointed out in~\cite{chelkak-izyurov-13}, a similar boundedness property holds in discrete.
Also, one has
$g_{[\Omega;u_1,...,u_m]}(z)=g_{[\Omega';\varphi(u_1),...,\varphi(u_m)]}(\varphi(z))\cdot(\varphi'(z))^{\frac{1}{2}}$ for conformal maps~$\varphi:\Omega\to\Omega'$; this easily follows from the uniqueness property.
\end{remark}

\begin{theorem}[{\cite[Theorem~2.16]{chelkak-hongler-izyurov-15}}]
\label{thm:spinor-convergence} Let~$u_1,...,u_m$ and~$z$ be distinct inner points of~$\Omega$, below we use the same notation~$u_s$ for a face of~$\Omega_\delta$ approximating the point~$u_s$. Let~$\eta\in{\{1,i,e^{\pm i\frac{\pi}{4}}\}}$ and~$e$ be an edge of~$\Omega_\delta$ approximating the point~$z$. One has
\[
\delta^{-\frac{1}{2}}\cdot F_{[\Omega_\delta;u_1,...,u_m]}(u_1^{[\eta]},z_e) ~\to~(\tfrac{2}{\pi})^{\frac{1}{2}}\cdot g_{[\Omega;u_1,...,u_m]}(z) \quad \text{as}\ \ \delta\to 0\,.
\]
\end{theorem}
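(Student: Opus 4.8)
The plan is to run the standard convergence machinery for s-holomorphic observables (as developed in~\cite{chelkak-smirnov-12}), adapted to the spinor setting on the double-cover. Recall first that at criticality the massive factor in~\eqref{eq:s-hol-condition} disappears, so the normalized observable $F_\delta:=\delta^{-\frac{1}{2}}F_{[\Omega_\delta;u_1,\ldots,u_m]}(u_1^{[\eta]},z_e)$ is a genuine discrete holomorphic spinor on $[\Omega_\delta;u_1,\ldots,u_m]$, away from the source corner and the boundary. The argument splits into three parts: (i) uniform a priori bounds and precompactness; (ii) identification of every subsequential limit as the solution $g_{[\Omega;u_1,\ldots,u_m]}$ of the boundary value problem in Definition~\ref{def:g}; (iii) uniqueness, which promotes subsequential convergence to genuine convergence.

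The engine of part~(i) is the discrete primitive of the square: for an s-holomorphic function one defines a real-valued spinor $H_\delta$ playing the role of $\mathrm{Im}[\int F_\delta^2\,dz]$, which is discrete subharmonic on one sublattice and superharmonic on the complementary one, and whose boundary values are pinned by the s-holomorphicity condition~\eqref{eq:s-hol-condition} to be (discretely) constant along $\partial\Omega_\delta$, i.e.\ a Dirichlet problem. The spinor structure (the sign-flip symmetry between sheets) makes $H_\delta$ single-valued on $\Omega_\delta$, so the branch points do not obstruct the maximum principle; the only source singularity is at $u_1$, where $H_\delta$ acquires a controlled logarithmic contribution matching the expected $(z-u_1)^{-\frac{1}{2}}$ behaviour of $F_\delta$. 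Applying the maximum principle to $H_\delta$ yields uniform bounds on $F_\delta$ on every compact subset of $\Omega\setminus\{u_1,\ldots,u_m\}$; standard discrete complex-analytic regularity then upgrades this to equicontinuity, whence precompactness in the topology of uniform convergence on compacts (Arzel\`a--Ascoli).

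For part~(ii), extract a subsequential limit $f$. Discrete holomorphicity passes to the limit, so $f$ is a holomorphic spinor on $[\Omega;u_1,\ldots,u_m]\setminus\{u_1,\ldots,u_m\}$, and the discrete boundary relation converges to $\mathrm{Im}[f(\zeta)(\tau(\zeta))^{\frac{1}{2}}]=0$ on $\partial\Omega$. At each branch point $u_l$ with $l\ge 2$ the spinor monodromy allows a singularity of order $(z-u_l)^{-\frac{1}{2}}$; the one-sided boundedness of the discrete energy $H_\delta$ near $u_l$, inherited from the maximum principle, forces the corresponding coefficient to lie in $e^{i\frac{\pi}{4}}\mathbb{R}$, exactly as required by Definition~\ref{def:g}. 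At the source corner $u_1$ one performs a local analysis matching $F_\delta$ to an explicit full-plane spinor; since $u_1$ is a branch point, all four admissible source corners $\eta\in\{1,i,e^{\pm i\frac{\pi}{4}}\}$ produce the same leading $(z-u_1)^{-\frac{1}{2}}$ singularity, and this computation simultaneously identifies the coefficient $c_1=e^{-i\frac{\pi}{4}}$ and the normalizing prefactor $(\tfrac{2}{\pi})^{\frac{1}{2}}$, in complete parallel with the fermionic computation behind Theorem~\ref{thm:fermions-convergence}.

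It then remains to invoke the uniqueness established in Remark~\ref{rem:spinor-covariance}: the solution of the boundary value problem of Definition~\ref{def:g} is unique, so $f=g_{[\Omega;u_1,\ldots,u_m]}$; since all subsequential limits coincide, the full sequence converges to $(\tfrac{2}{\pi})^{\frac{1}{2}}g_{[\Omega;u_1,\ldots,u_m]}(z)$. The main obstacle is part~(i): obtaining a priori bounds that are uniform up to a possibly rough boundary $\partial\Omega$ and near the branch points, and in particular controlling $H_\delta$ in the vicinity of the source $u_1$ where discrete holomorphicity fails. This is precisely where one must use the robust boundary-regularity scheme of~\cite{chelkak-hongler-izyurov-15} rather than assuming smoothness of $\partial\Omega$; pinning down the exact constant $c_1=e^{-i\frac{\pi}{4}}$ and the prefactor $(\tfrac{2}{\pi})^{\frac{1}{2}}$ through the local matching near the source is the other technically delicate point.
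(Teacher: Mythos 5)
Your outline is correct and follows essentially the same route as the paper, which gives no independent argument here but simply cites \cite[Theorem~2.16]{chelkak-hongler-izyurov-15}: the scheme there is exactly the one you describe --- s-holomorphicity of the spinor observable, the single-valued primitive $H_\delta$ of $F_\delta^2$ with its sub-/superharmonicity and Dirichlet boundary data, compactness, identification of subsequential limits with the boundary value problem of Definition~\ref{def:g} (including the one-sided control at the branch points and the local matching at the source fixing $c_1$ and the constant $(\tfrac{2}{\pi})^{\frac{1}{2}}$), and the uniqueness from Remark~\ref{rem:spinor-covariance}. You also correctly identify the genuinely delicate points (boundary regularity without smoothness assumptions and the analysis near the source), so there is nothing to object to at the level of a proof strategy.
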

Clearly, Theorem~\ref{thm:spinor-convergence} is not enough to analyze the spatial derivatives of spin correlations~$\mathbb{E}_{\Omega_\delta}[\sigma_{u_1}...\sigma_{u_m}]$ via the identity~\eqref{eq:Phi-cd=ratio-of-spins} since one needs to consider the scaling limit of the function~$F_{[\Omega_\delta;u_1,...,u_m]}$ \emph{near the singularity}~$u_1$. This analysis can be performed and the result is provided by the next theorem.

\begin{theorem}[{\cite[Theorem~2.18]{chelkak-hongler-izyurov-15}}]
\label{thm:spin-derivatives-converegnce} With the notation of Theorem~\ref{thm:spinor-convergence}, denote by $\widetilde{u}_1:=u_1{ +2i\overline{\eta}{}^2\delta}$ the next (cornerwise) face to~$u_1$ in the direction of~$u_1^{[\eta]}$. One has
\[
(2\delta)^{-1}\!\cdot\biggl[\frac{\mathbb{E}_{\Omega_\delta}[\sigma_{\widetilde{u}_1}\sigma_{u_2}...\sigma_{u_m}]} {\mathbb{E}_{\Omega_\delta}[\sigma_{u_1}\sigma_{u_2}...\sigma_{u_m}]}-1\biggr]~\to~ \mathrm{Re}[\eta^2\mathcal{A}_\Omega(u_1;u_2,..,,u_m)] \quad\text{as}\ \ \delta\to 0\,,
\]
where~$\mathcal{A}_\Omega(u_1;u_2,...,u_m)$ is defined from the following expansion as~$z\to u_1$:
\[
g_{[\Omega;u_1,...,u_m]}(z)={e^{-i\frac{\pi}{4}}}(z\!-\!u_1)^{-\frac{1}{2}}\cdot [1+2\mathcal{A}_\Omega(u_1;u_2,..,u_m)(z\!-\!u_1)+O(z\!-\!u_1)^2]\,.
\]
\end{theorem}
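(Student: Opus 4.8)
The plan is to reduce the statement to the behaviour of the spinor observable $F_{[\Omega_\delta;u_1,\ldots,u_m]}(u_1^{[\eta]},\cdot)$ near the source $u_1$, and then to extract the subleading coefficient $\mathcal{A}$ by comparing the discrete observable with its explicit singular part. First, by the identity~\eqref{eq:Phi-cd=ratio-of-spins} of Lemma~\ref{lemma:Phi-values-near-u}, the ratio on the left-hand side is exactly $\Phi_{[\Omega_\delta;u_1,\ldots,u_m]}(u_1^{[\eta]},\widetilde u_1^{[i\eta]})$, so the theorem amounts to
\[
(2\delta)^{-1}\bigl[\Phi_{[\Omega_\delta;u_1,\ldots,u_m]}(u_1^{[\eta]},\widetilde u_1^{[i\eta]})-1\bigr]\longrightarrow \mathrm{Re}[\eta^2\mathcal{A}_\Omega(u_1;u_2,\ldots,u_m)].
\]
Next I would pass from the corner observable $\Phi$ to the mid-edge observable $F$ through the critical s-holomorphicity relation~\eqref{eq:s-hol-condition}, in which the factor $e^{\pm\frac{i}{2}(\frac{\pi}{4}-\theta)}$ equals $1$ at $\theta=\frac{\pi}{4}$. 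Since $\eta_{\widetilde u_1^{[i\eta]}}=i\eta$, this expresses $\Phi(u_1^{[\eta]},\widetilde u_1^{[i\eta]})$ as $\mathrm{Re}[\,\overline{i\eta}\,F_{[\Omega_\delta]}(u_1^{[\eta]},z_e)]$, where $z_e$ is the mid-edge separating $u_1$ from $\widetilde u_1$. The whole problem is thereby transferred to the values of $F_{[\Omega_\delta]}(u_1^{[\eta]},\cdot)$ at mid-edges lying at distance of order $\delta$ from the source, i.e.\ precisely in the regime \emph{not} covered by Theorem~\ref{thm:spinor-convergence}.

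The second step is to split the discrete observable near $u_1$ into an explicit singular part and a regular remainder. Let $g^\bullet_\delta$ be the discrete-holomorphic full-plane spinor branching at $u_1$, normalized to match the leading singularity $e^{-i\pi/4}(z-u_1)^{-1/2}$ prescribed in Definition~\ref{def:g}; its values on the innermost corners are computable directly, and since it carries no regular $(z-u_1)^{1/2}$-term, its $\Phi$-value reproduces the source value $1$ up to lower order, so the whole nontrivial $O(\delta)$ correction is carried by the remainder. Writing $F_{[\Omega_\delta]}(u_1^{[\eta]},\cdot)=C_\delta\,g^\bullet_\delta+R_\delta$ with $C_\delta=(2\delta/\pi)^{1/2}(1+o(1))$, the remainder $R_\delta$ is an s-holomorphic spinor that is regular in a whole disc around $u_1$. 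The coefficient $2\mathcal{A}$ is exactly the leading coefficient of the continuum regular part $r(z):=(2/\pi)^{1/2}[\,g_{[\Omega;u_1,\ldots,u_m]}(z)-e^{-i\pi/4}(z-u_1)^{-1/2}\,]=(2/\pi)^{1/2}e^{-i\pi/4}[\,2\mathcal{A}\,(z-u_1)^{1/2}+O((z-u_1)^{3/2})\,]$. Since $r$ vanishes like $(z-u_1)^{1/2}$, at the first mid-edge the term $C_\delta g^\bullet_\delta$ is of order $1$ and supplies the ``$-1$'', while $R_\delta$ is of order $\delta$; dividing by $2\delta$ isolates the coefficient $\mathcal{A}$. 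The bookkeeping of the unimodular prefactors — the calibration $c_1=e^{-i\pi/4}$, the factor $(2/\pi)^{1/2}$, and the choice of square roots and of the spinor lift entering $\eta_c$ — is arranged precisely so that the real combination produced by the projection $\mathrm{Re}[\overline{i\eta}\,\cdot]$ together with the step direction $2i\overline\eta^2$ is $\mathrm{Re}[\eta^2\mathcal{A}]$.

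The hard part, and the genuine obstacle, is the third step: upgrading the bulk convergence of Theorem~\ref{thm:spinor-convergence}, valid only at inner points bounded away from $u_1$, to convergence of the coefficient of $R_\delta$ at the branch point itself. After subtracting $C_\delta g^\bullet_\delta$, the rescaled remainder $\delta^{-1/2}R_\delta$ is a uniformly bounded s-holomorphic spinor on $[\Omega_\delta;u_1,\ldots,u_m]$ that converges to $r$ on compact subsets away from $u_1$. I would then invoke the a priori regularity theory for s-holomorphic functions (equicontinuity and discrete Cauchy-type estimates, as in~\cite{chelkak-smirnov-12}) to promote this to uniform convergence up to $u_1$, together with the relevant first-order increment. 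The delicate point is that $r$ vanishes at $u_1$, so the comparison must be carried out at the scale of a single lattice step and on the innermost corners of the double-cover; matching the discrete and continuum coefficients of $(z-u_1)^{1/2}$ there is exactly what yields $\mathcal{A}$. Once this near-source convergence is established, combining it with the two reductions above gives the stated limit.
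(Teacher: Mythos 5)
Your proposal follows the same route the paper indicates: reduce via \eqref{eq:Phi-cd=ratio-of-spins} to the value of the spinor observable at the corner adjacent to the source, then analyze $F_{[\Omega_\delta;u_1,\ldots,u_m]}$ \emph{near the singularity} $u_1$ by subtracting an explicit discrete full-plane branching kernel and controlling the regular s-holomorphic remainder at lattice scale --- which is precisely the argument of \cite[Theorem~2.18 and Section~3]{chelkak-hongler-izyurov-15} that this survey cites in lieu of a proof. The only slip is cosmetic: $u_1$ and $\widetilde u_1$ are cornerwise (not edgewise) adjacent on $\Cdiscr_\delta$, so there is no mid-edge ``separating'' them; one recovers $\Phi$ at the corner $\widetilde u_1^{[i\eta]}$ via \eqref{eq:s-hol-condition} from either of the two mid-edges incident to that corner.
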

\begin{remark} It easily follows from the conformal covariance of~$g_{[\Omega;u_1,...,u_m]}(z)$ (see Remark~\ref{rem:spinor-covariance}) that~$\mathcal{A}_\Omega(u_1;u_2,...,u_m)$ is a \emph{pre-Schwarzian form}: one has
\begin{equation}
\label{eq:covariance-A}
\mathcal{A}_\Omega(u_1;u_2,...,u_m)~=~ \mathcal{A}_{\Omega'}(\varphi(u_1);\varphi(u_2),...,\varphi(u_m))\cdot \varphi'(u_1) +\tfrac{1}{8}\cdot(\log\varphi')'(u_1)\,.
\end{equation}
for conformal maps~$\varphi:\Omega\to\Omega'$. Note that the factor~$\frac{1}{8}$ above must coincide with the exponent in~\eqref{eq:covariance-spins}, i.e. with the scaling exponent of the spin field. This gives an explanation for its value that does \emph{not} use explicit computations such as~\eqref{eq:Dn-crit-explicit}.
\end{remark}

\subsection{Spin correlations} \label{subsect:spins-convergence} Let~$u_1,...,u_m$ and~$w_1,...,w_m$ be two collections of points of~$\Omega$. The next result is a simple corollary of Theorem~\ref{thm:spin-derivatives-converegnce}: as~$\delta\to 0$, one has
\begin{equation}
\label{eq:spins-ratios-convergence}
\log\frac{\mathbb{E}_{\Omega_\delta}[\sigma_{w_1}\sigma_{w_2}...\sigma_{w_m}]}{\mathbb{E}_{\Omega_\delta}[\sigma_{u_1}\sigma_{u_2}...\sigma_{u_m}]}
\;\to %\frac{\Corr\Omega{\sigma_{w_1}...\sigma_{w_m}}}{\Corr\Omega{\sigma_{u_1}...\sigma_{u_m}}}:=
\int\nolimits_{(u_1,...,u_m)}^{(w_1,...,w_m)}\! \mathrm{Re}\biggl[\,\sum_{l=1}^m \mathcal{A}_\Omega(u_l;u_1,...,\widehat{u}_l,...,u_m)du_l\biggr].
\end{equation}
In particular, this differential form must be exact and one can \emph{define} the function~$\Corr\Omega{\sigma_{u_1}...\sigma_{u_m}}$ to be the exponential of its primitive, with an appropriate multiplicative normalization given by~\eqref{eq:spins-decorrelation}. The conformal covariance~\eqref{eq:covariance-spins} of these functions is then a simple corollary of~\eqref{eq:covariance-A} and one can check that the CFT prediction~\eqref{eq:spins-explicit} can be indeed obtained in this way; see~\cite[Appendix~A]{chelkak-hongler-izyurov-15}.

\newpage %%%%%%%%%%%%%%%%%%%%%%%%%%%%%%%%%%%%% \smallskip

The last ingredient needed to deduce from~\eqref{eq:spins-ratios-convergence} the scaling limits of the expectations~$\mathbb{E}_{\Omega_\delta}[\sigma_{u_1}\sigma_{u_2}...\sigma_{u_m}]$ is provided by discrete counterparts of the asymptotics
\begin{equation}
\label{eq:spins-decorrelation}
\begin{array}{rclr}
\Corr\Omega{\sigma_{u_1}...\sigma_{u_m}} & \sim & \Corr\Omega{\sigma_{u_1}....\sigma_{u_{m-1}}}\cdot\Corr\Omega{\sigma_{u_m}}& \text{as}~\ u_m\to\partial\Omega\,,\\
\Corr\Omega{\sigma_{u_1}\sigma_{u_2}} & \sim & |u_2-u_1|^{-\frac{1}{4}}& \text{as}\ \ u_2\to u_1\in\Omega\,.
\end{array}
\end{equation}
In particular, one can show that~$\lim_{u_2\to u_1}\lim_{\delta\to 0}\mathbb{E}_{\Omega_\delta}[\sigma_{u_1}\sigma_{u_2}]/\, \mathbb{E}_{\mathbb{C}^\diamond_\delta}[\sigma_{u_1}\sigma_{u_2}]=1$ and use~\eqref{eq:Dn-crit-explicit} in order to find the correct normalization of the two-point expectations~$\mathbb{E}_{\Omega_\delta}[\sigma_{u_1}\sigma_{u_2}]$; see~\cite[Sections~2.8~and~2.9]{chelkak-hongler-izyurov-15} for further details.

\begin{theorem}[{\cite[Theorem~1.2]{chelkak-hongler-izyurov-15}}]
\label{thm:spins-convergence} Let~$u_1,...,u_m$ be a collection of inner points of a simply connected domain~$\Omega$. The following convergence holds true:
\[
\delta^{-\frac{m}{8}}\mathbb{E}_{\Omega_\delta}[\sigma_{u_1}...\sigma_{u_m}]~\to~\mathcal{C}_\sigma^m\cdot\Corr\Omega{\sigma_{u_1}...\sigma_{u_m}} \quad\text{as}\ \ \delta\to 0\,,
\]
where~$\mathcal{C}_\sigma=2^{\frac{1}{6}}e^{-\frac{3}{2}\zeta'(1)}$ and the functions~$\Corr\Omega{\sigma_{u_1}...\sigma_{u_m}}$ are given by~\eqref{eq:spins-explicit} and~\eqref{eq:covariance-spins}.
\end{theorem}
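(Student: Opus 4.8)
The plan is to promote the \emph{ratio} convergence \eqref{eq:spins-ratios-convergence}, which is already a consequence of Theorem~\ref{thm:spin-derivatives-converegnce}, to convergence of the expectations themselves, by pinning down one position-independent normalization and identifying it with $\mathcal{C}_\sigma^m\delta^{m/8}$. Since the continuum function $\Corr{\Omega}{\sigma_{u_1}...\sigma_{u_m}}$ is by construction the exponential of a primitive of the limiting form in \eqref{eq:spins-ratios-convergence}, that relation says exactly that, for two configurations inside a fixed $\Omega$, the quotient $\mathbb{E}_{\Omega_\delta}[\sigma_{u_1}...\sigma_{u_m}]\big/\Corr{\Omega}{\sigma_{u_1}...\sigma_{u_m}}$ is, up to a $(1+o(1))$ factor as $\delta\to0$, \emph{independent of the points}; denote its common value by $B_m(\delta,\Omega)$. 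First I would record this reduction: it remains to show that $B_m(\delta,\Omega)\sim\mathcal{C}_\sigma^m\delta^{m/8}$ and, in particular, that it does not depend on $\Omega$; ratio convergence then transports this single asymptotic to every configuration.

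Next I would anchor the case $m=2$ through the explicit full-plane computation of Theorem~\ref{thm:explcit-formulae}. Placing the two faces at lattice distance $2n$ apart in $\Cdiscr_\delta$, i.e.\ $|u_2-u_1|=2n\delta$, gives $\mathbb{E}_{\Cdiscr_\delta}[\sigma_{u_1}\sigma_{u_2}]=D_n$, and the asymptotics \eqref{eq:Dn-crit-explicit} yield $\delta^{-\frac14}\mathbb{E}_{\Cdiscr_\delta}[\sigma_{u_1}\sigma_{u_2}]\to\mathcal{C}_\sigma^2\,|u_2-u_1|^{-\frac14}$ with $\mathcal{C}_\sigma^2=2^{\frac13}e^{-3\zeta'(-1)}$. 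Invoking locality of the short-distance behaviour---namely $\lim_{u_2\to u_1}\lim_{\delta\to0}\mathbb{E}_{\Omega_\delta}[\sigma_{u_1}\sigma_{u_2}]\big/\mathbb{E}_{\Cdiscr_\delta}[\sigma_{u_1}\sigma_{u_2}]=1$---transfers this to an arbitrary $\Omega$, and, together with the continuum short-distance law $\Corr{\Omega}{\sigma_{u_1}\sigma_{u_2}}\sim|u_2-u_1|^{-\frac14}$ from \eqref{eq:spins-decorrelation}, forces $B_2(\delta,\Omega)=\mathcal{C}_\sigma^2\delta^{\frac14}$ for every $\Omega$.

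Then I would run an induction on $m$ using the boundary factorization \eqref{eq:spins-decorrelation}. The analytic input is its \emph{discrete} counterpart: $\mathbb{E}_{\Omega_\delta}[\sigma_{u_1}...\sigma_{u_m}]\big/\bigl(\mathbb{E}_{\Omega_\delta}[\sigma_{u_1}...\sigma_{u_{m-1}}]\,\mathbb{E}_{\Omega_\delta}[\sigma_{u_m}]\bigr)\to1$ as $u_m\to\partial\Omega$, uniformly enough in $\delta$ to commute with $\delta\to0$; with ``$+$'' boundary conditions this reflects the decoupling of a near-boundary spin from the bulk. Matching it with the continuum factorization gives the multiplicativity $B_m(\delta,\Omega)=B_{m-1}(\delta,\Omega)\cdot B_1(\delta,\Omega)$. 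Applied to $m=2$ (now sending $u_2\to\partial\Omega$) it yields $B_2=B_1^2$, whence $B_1(\delta,\Omega)=\mathcal{C}_\sigma\delta^{\frac18}$ by the previous step, and the induction then gives $B_m=\mathcal{C}_\sigma^m\delta^{\frac m8}$. The conformal covariance \eqref{eq:covariance-spins} of the limit is inherited from the pre-Schwarzian rule \eqref{eq:covariance-A}, and the agreement with the explicit half-plane form \eqref{eq:spins-explicit} is checked on the continuum side through \eqref{eq:spins-ratios-convergence}.

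The hard part will be the uniform estimates underlying the two degenerate limits: one must bound the error terms in the discrete decorrelation as $u_m\to\partial\Omega$ and in the locality comparison as $u_2\to u_1$, with constants uniform in $\delta$, so that the order of the limits $\delta\to0$ and $u_m\to\partial\Omega$ (respectively $u_2\to u_1$) may be interchanged. This rests on a~priori bounds and RSW-type crossing estimates for the critical model, together with the universality of the short-distance two-point asymptotics; these are the genuinely technical ingredients, whereas the bookkeeping of the normalization constants above is comparatively routine once they are available.
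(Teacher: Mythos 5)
Your proposal follows essentially the same route as the paper: reduce to a position-independent normalization via the ratio convergence \eqref{eq:spins-ratios-convergence}, pin it down through the discrete counterparts of the boundary factorization and short-distance asymptotics \eqref{eq:spins-decorrelation}, and anchor the constant by the full-plane diagonal formula \eqref{eq:Dn-crit-explicit} together with the locality statement $\lim_{u_2\to u_1}\lim_{\delta\to 0}\mathbb{E}_{\Omega_\delta}[\sigma_{u_1}\sigma_{u_2}]/\mathbb{E}_{\mathbb{C}^\diamond_\delta}[\sigma_{u_1}\sigma_{u_2}]=1$. You also correctly identify the genuinely technical content (uniformity in $\delta$ of the two degenerate limits), which is exactly what is deferred to \cite[Sections~2.8 and~2.9]{chelkak-hongler-izyurov-15}.
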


\subsection{Mixed correlations in continuum} \label{subsect:mixed-in-continuum}
Our last goal for this note is to discuss a generalization of Theorems~\ref{thm:fermions-convergence},~\ref{thm:energies-convergence} and~\ref{thm:spins-convergence} to mixed correlations of spins, disorders, fermions and energy densities. In this section we list several properties of their expected scaling limits (e.g., see~\cite[Section~14.2.1]{mussardo-book-10}) that allow one to determine them uniquely via solutions to boundary value problems similar to the ones discussed in Definitions~\ref{def:feta} and~\ref{def:g}. We claim (in fact, this claim should be considered as a theorem, see~\cite{chelkak-hongler-izyurov-16}) that there exists a collection of functions~$\Corr\Omega{\mu_{v_1}...\mu_{v_n}\sigma_{u_1}...\sigma_{u_m}}$\,, where~$n$ is even and the points $v_l,u_s\in\Omega$ are pairwise distinct, such that the following \emph{overdetermined} set of conditions is satisfied.

\smallskip

{\bf (I)} Each~$\Corr\Omega{\mu_{v_1}...\mu_{v_n}\sigma_{u_1}...\sigma_{u_m}}$ is a spinor defined on the Riemann surface of the function~$(\prod_{l=1}^n\prod_{s=1}^m(v_l-u_s))^{\frac{1}{2}}$. As some of the points~$v_1,...,v_n$ approach~$u_1,..,u_m$ along the rays~$v_s\!-\!u_s\in {i\overline{\eta}{}_s^2}\mathbb{R}$, where~$|\eta_s|\!=\!1$, there exist real-valued limits
\[
\textstyle \Corr\Omega{\psi_{u_1}^{[\eta_1]}\!...\psi_{u_{k}}^{[\eta_{k}]}\mathcal{O}[\mu,\sigma]}\,:=\,\lim_{v_s\to u_s}\!
|(v_1-u_1)...(v_k-u_k)|^{\frac{1}{4}}\Corr\Omega{\mu_{v_1}\sigma_{u_1}...\mu_{v_k}\sigma_{u_k}\mathcal{O}[\mu,\sigma]},
\]
where~$\mathcal{O}[\mu,\sigma]$ stands for the remaining disorders and spins. Due to the spinor nature of~$\Corr\Omega{\mu_{v_1}...\mu_{v_n}\sigma_{u_1}...\sigma_{u_m}}$, these limits change signs if~$\eta_s$ is replaced by~$-\eta_s$ and are anti-symmetric with respect to the order in which~$\psi$'s are written.

\smallskip

{\bf (II)} The functions~$\Corr\Omega{\psi_{u_1}^{[\eta_1]}\!\!...\psi_{u_{k}}^{[\eta_{k}]}\mathcal{O}[\mu,\sigma]}$ satisfy Pfaffian identities (aka fermionic Wick rules). Moreover, they depend on~$\eta_s$ in a real-linear way, which allows one to introduce the notation ($\mathcal{O}[\psi,\mu,\sigma]$ stands for other fermions, disorders and spins)
\begin{equation}
\label{eq:psieta-as-psi-opsi}
\Corr\Omega{\psi_z^{[\eta]}\mathcal{O}[\psi,\mu,\sigma]}~=~ {\tfrac{1}{2}}\bigl[\,{ \overline{\eta}}\Corr\Omega{\psi_z\mathcal{O}[\psi,\mu,\sigma]}+ {\eta}\Corr\Omega{\opsi_z\mathcal{O}[\psi,\mu,\sigma]}\,\bigr]\,.
\end{equation}
Furthermore, one has the identity~$\overline{\Corr\Omega{\mathcal{O}[\psi,\mu,\sigma]}}=\Corr\Omega{\mathcal{O}[\psi^*\!,\mu,\sigma]}$ by which we mean that each of the symbols~$\psi_z$ on the left-hand side must be replaced by~$\opsi_z$ on the right-hand side and vice versa, with all the other symbols kept unchanged.

\smallskip

{\bf (III)} Each of the functions~$\Corr\Omega{\psi_z\mathcal{O}[\psi,\mu,\sigma]}$ is holomorphic in~$z$ and each of the functions~$\Corr\Omega{\opsi_z\mathcal{O}[\psi,\mu,\sigma]}$ is anti-holomorphic in~$z$. Moreover, one has
\[
\Corr\Omega{\opsi_z\mathcal{O}[\psi,\mu,\sigma]}=\tau(z)\Corr\Omega{\psi_z\mathcal{O}[\psi,\mu,\sigma]}\quad\text{for}\ \ z\in\partial\Omega\,,
\]
where~$\tau(z)$ denotes the counterclockwise tangent vector to the boundary~$\partial\Omega$ at~$z$.

\newpage %%%%%%%%%%%%%%%%%%%%%%%%%%%%%%%%%%%%% \smallskip

{\bf (IV)} Each of the holomorphic functions~$\Corr\Omega{\psi_z...}$ has the following asymptotics (aka operator product expansions) as~$\psi_z$ approaches the other fields:
\begin{align*}
\Corr\Omega{\psi_z\psi_{z'}...}&=\
{2}(z\!-\!z')^{-1}\!\left[\Corr\Omega{...}+O(|z\!-\!z'|^2)\right]\!,\qquad\Corr\Omega{\psi_z\opsi_{z'}...}=O(1)\,,& z\to z';\\
\Corr\Omega{\psi_z\sigma_u...}&=\ \,
{e^{i\frac{\pi}{4}}}(z\!-\!u)^{-\frac{1}{2}} \left[\Corr\Omega{\mu_u...}\! - 4(z\!-\!u)\partial_u\Corr\Omega{\mu_u...}\!+O(|z\!-\!u|^{2})\right]\!, & z\to u\,;\\
\Corr\Omega{\psi_z\mu_v...}&=
{e^{-i\frac{\pi}{4}}}(z\!-\!v)^{-\frac{1}{2}} \left[\Corr\Omega{\sigma_v...}+4(z\!-\!v)\partial_v\Corr\Omega{\sigma_v...}+O(|z\!-\!v|^{2})\right]\!, & z\to v\,.
\end{align*}
Similar expansions are fulfilled for anti-holomorphic functions $\Corr\Omega{\opsi_z...}$.

\smallskip

{\bf (V)} If we denote~$\Corr\Omega{\varepsilon_u...}:=\lim_{z,z'\to u} {\tfrac{i}{2}}\Corr\Omega{\psi_{z}\opsi_{z'}...}$\,, then one has
\begin{align*}
\Corr\Omega{\sigma_{u'}\sigma_u...}&=|u'\!-\!u|^{-\frac{1}{4}}\left[\Corr\Omega{...}\!+ \tfrac{1}{2}|u'\!-\!u|\Corr\Omega{\varepsilon_{u}...}\!+o(|u'\!-\!u|)\right], & u'\to u;\\
\Corr\Omega{\mu_{v'}\mu_v...}&=|v'\!-\!v|^{-\frac{1}{4}}\left[\Corr\Omega{...} -\tfrac{1}{2}|v'\!-\!v|\Corr\Omega{\varepsilon_{v}...}+o(|v'\!-\!v|)\right], & v'\to v.
\end{align*}

\begin{remark} Provided~$\Corr\Omega{1}=1$, conditions (I)--(V) uniquely determine all the correlators that contain an even number of spins but not the normalization of those containing an odd number of spins.  Similarly to~\eqref{eq:spins-decorrelation}, one can add asymptotics~$\Corr\Omega{\sigma_u...}\sim\Corr\Omega{\sigma_u}\Corr\Omega{...}$ as~$u\to\partial\Omega$ to (I)--(V) in order to fix this issue; see~\cite{chelkak-hongler-izyurov-16} for a further discussion including the \emph{consistency} of these conditions.
\end{remark}

\subsection{Conformal covariance and convergence of mixed correlations} \label{subsect:mixed-convergence}
Following the same lines as in the discussion of conformal covariance of fermionic~\eqref{eq:covariance-fermions} and spin~\eqref{eq:covariance-spins} correlators given above, one can deduce from conditions (I)--(V) that
\[
\textstyle \Corr\Omega{\mathcal{O}_1(z_1)...\mathcal{O}_N(z_N)} = \Corr{\Omega'}{\mathcal{O}_1(\varphi(z_1))...\mathcal{O}_N(\varphi(z_N))}\cdot {\prod_{s=1}^N}\varphi'(z_s)^{\Delta^{\!+}({\mathcal{O}_s})}\overline{\varphi'(z_s)}\,^{\Delta^{\!-}({\mathcal{O}_s})}
\]
for conformal maps~$\varphi:\Omega\to\Omega'$, where each of the symbols~$\mathcal{O}_s$ denotes one of the fields~$\sigma,\mu,\psi,\opsi,\varepsilon$ (so that the total number of~$\mu,\psi$ and $\opsi$ is even) and %the \emph{conformal weights}~$(\Delta^{\!+},\Delta^{\!-})(\mathcal{O}_s)$ are given by
\begin{align*}
(\Delta^{\!+},\Delta^{\!-})(\sigma)=(\Delta^{\!+},\Delta^{\!-})(\mu)=(\tfrac{1}{16}\,,\tfrac{1}{16})\,,&\quad (\Delta^{\!+},\Delta^{\!-})(\varepsilon)=(\tfrac{1}{2}\,,\tfrac{1}{2})\,,\\
(\Delta^{\!+},\Delta^{\!-})(\psi)=(\tfrac{1}{2}\,,0)\,,&\quad (\Delta^{\!+},\Delta^{\!-})(\opsi)=(0\,,\tfrac{1}{2})
\end{align*}
are called the \emph{conformal weights}. Let us also set~$(\Delta^{\!+},\Delta^{\!-})(\psi^{[\eta]}):=(\tfrac{1}{4},\tfrac{1}{4})$; note that according to~\eqref{eq:psieta-as-psi-opsi} one should make a change~$\eta'{}_{\!\!s}:=\eta_s\exp[\frac{i}{2}\arg\varphi'(z_s)]$ when writing a similar covariance rule for correlators involving such fermions.

\smallskip

We now come back to the discrete prototypes of the \emph{real-valued} CFT correlators involving the fields~$\sigma,\mu,\varepsilon$ and~$\psi^{[\eta]}$ with~$\eta\in{\{1,i,e^{\pm i\frac{\pi}{4}}\}}$. In fact, all of them can be written using the spin-disorder formalism introduced in Section~\ref{subsect:spin-disorder}: the energy density~$\varepsilon$ is given by~\eqref{eq:energy-density-def} and the fermion~$\psi^{[\eta]}$ should be thought of as the product~$\chi_c=\mu_{v(c)}\sigma_{u(c)}$, where~$\eta_c=\eta$ (see Remark~\ref{rem:phi-chi-change} and Section~\ref{subsect:s-holomorphicity}; note that the s-holomorphicity condition~\eqref{eq:s-hol-condition} is nothing but the discrete counterpart of~\eqref{eq:psieta-as-psi-opsi}). We conclude this note by the following generalization of Theorems~\ref{thm:fermions-convergence},~\ref{thm:energies-convergence} and~\ref{thm:spins-convergence}.

\begin{theorem}[{see~\cite{chelkak-hongler-izyurov-16}}] Let~$z_1,...,z_N$ be a collection of pairwise distinct points in a planar domain~$\Omega$ and each of~$\mathcal{O}_s$ denote either~$\sigma,\mu,\varepsilon$ or~$\psi^{[\eta]}$ with~$\eta\in{\{1,i,e^{\pm i\frac{\pi}{4}}\}}$. Let~$\Delta:=\sum_{s=1}^N(\Delta^{\!+}(\mathcal{O}_s)\!+\!\Delta^{\!-}(\mathcal{O}_s))$. Then one has
\[
\delta^{-\Delta}\cdot\Corr{\Omega_\delta}{\mathcal{O}_1(z_1)...\mathcal{O}_N(z_N)} ~\to~ \mathcal{C}\cdot \Corr\Omega{\mathcal{O}_1(z_1)...\mathcal{O}_N(z_N)}\quad\text{as}\ \ \delta\to 0\,,
\]
where~$\mathcal{C}\!=\!\prod_{s=1}^N\mathcal{C}_{\mathcal{O}_s}$ and~$\mathcal{C}_{\mathcal{O}_s}$ are given by~$\mathcal{C}_\sigma\!=\mathcal{C}_\mu\!=2^{\frac{1}{6}}e^{-\frac{3}{2}\zeta'(1)}\!$ and~$\mathcal{C}_\varepsilon\!=(\mathcal{C}_{\psi})^2\!=\frac{2}{\pi}$\,.
\end{theorem}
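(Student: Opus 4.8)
The plan is to reduce the statement to the convergence of discrete holomorphic spinor observables and then to invoke the uniqueness of the continuum correlators that is guaranteed by conditions (I)--(V). First I would rewrite every discrete correlator $\Corr{\Omega_\delta}{\mathcal{O}_1(z_1)\dots\mathcal{O}_N(z_N)}$ purely within the spin--disorder formalism of Section~\ref{subsect:spin-disorder}: each energy density is expressed through~\eqref{eq:energy-density-def}, and each fermion $\psi^{[\eta]}$ is realized as the Kadanoff--Ceva variable $\chi_c=\mu_{v(c)}\sigma_{u(c)}$ with $\eta_c=\eta$. In this way all the fields are placed on a common footing as (limits of) Pfaffian minors of the inverse Kac--Ward matrix, now living on the double-cover $[\Omega_\delta;u_1,\dots]$ dictated by the branch points at the spin and disorder insertions; the Pfaffian identities of Remark~\ref{rem:phi-chi-change} then reduce the whole problem to the two-point building blocks.

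Second, I would verify that these discrete correlators satisfy exact \emph{discrete} counterparts of (I)--(V). The spinor property (I) is built into the sign-flip symmetry of~\eqref{eq:disorders-spins-mixed-def}; the fermionic Wick rule (II) is the Pfaffian structure just mentioned, with the real-linear dependence on $\eta$ being precisely the s-holomorphicity relation~\eqref{eq:s-hol-condition}; the holomorphicity and boundary relation (III) are the (critical, hence massless) s-holomorphicity away from the sources together with the ``$+$'' boundary conditions; and the operator product expansions (IV)--(V) are the local combinatorial identities recorded in Lemmas~\ref{lemma:Phi-values-near-u} and~\ref{lemma:Phi-m=2-values} together with~\eqref{eq:energy-via-fermions}, which control the behaviour of the observables as two insertion points collide or as a fermion approaches a spin or a disorder.

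Third comes the genuinely analytic step: the convergence of the discrete two-point observables to solutions of the Riemann-type boundary value problems of Definitions~\ref{def:feta} and~\ref{def:g}. Here I would run the scheme of~\cite{chelkak-smirnov-12,hongler-smirnov-13,chelkak-hongler-izyurov-15}: s-holomorphicity lets one form the discrete primitive $H$ of $F^2$ (the lattice analogue of $\mathrm{Im}\int f^2\,dz$), which satisfies discrete maximum-principle estimates with Dirichlet-type boundary data; a priori regularity for such functions yields equicontinuity away from the singularities, compactness produces subsequential limits, these limits are holomorphic spinors with the prescribed $(z-u_s)^{-1/2}$ singularities and the correct boundary behaviour, and the uniqueness in Definitions~\ref{def:feta}--\ref{def:g} (see Remark~\ref{rem:spinor-covariance}) identifies the full limit. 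This is exactly Theorems~\ref{thm:fermions-convergence},~\ref{thm:energies-convergence},~\ref{thm:spinor-convergence} and~\ref{thm:spin-derivatives-converegnce}, which I would use directly and, for configurations with several branch points, extend \emph{ad verbum}.

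Finally I would pass from observables to correlators and collect the constants. For fermions and energy densities the observables \emph{are} the correlators, so Theorems~\ref{thm:fermions-convergence} and~\ref{thm:energies-convergence} supply both the limits and $\mathcal{C}_\varepsilon=(\mathcal{C}_\psi)^2=\tfrac{2}{\pi}$. Spin and disorder correlators are recovered only through their logarithmic spatial derivatives: Lemma~\ref{lemma:Phi-values-near-u} and Theorem~\ref{thm:spin-derivatives-converegnce} furnish the pre-Schwarzian coefficient $\mathcal{A}_\Omega$, and integrating the exact form in~\eqref{eq:spins-ratios-convergence} along a path, then fixing the integration constant by the fusion and decorrelation asymptotics~\eqref{eq:spins-decorrelation} (with $\mathcal{C}_\sigma=\mathcal{C}_\mu$ forced by Kramers--Wannier duality), reproduces Theorem~\ref{thm:spins-convergence} and the value of $\mathcal{C}_\sigma$, itself pinned down by the full-plane formula~\eqref{eq:Dn-crit-explicit}. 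The multiplicativity $\mathcal{C}=\prod_s\mathcal{C}_{\mathcal{O}_s}$ then drops out because each field contributes its own leading asymptotic coefficient independently. The main obstacle is the third step in the presence of several colliding or boundary-approaching insertions: obtaining the a priori estimates and the convergence of the subleading coefficient $\mathcal{A}_\Omega$ \emph{uniformly} near all branch points $u_s$ at once, and checking that the resulting derivative form is exact so that the integration constant for the spin correlators is consistently determined.
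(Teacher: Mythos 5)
Your proposal follows essentially the same route the paper outlines: the survey itself gives no self-contained proof of this theorem (it is quoted from the forthcoming reference \cite{chelkak-hongler-izyurov-16}), but Sections~\ref{subsect:fermions-convergence}--\ref{subsect:mixed-convergence} sketch exactly the strategy you describe — reduce everything to Kadanoff--Ceva/Pfaffian form, prove convergence of the s-holomorphic (spinor) observables to solutions of the Riemann-type boundary value problems of Definitions~\ref{def:feta} and~\ref{def:g}, integrate the pre-Schwarzian form for the spin and disorder parts, and fix the constants via \eqref{eq:spins-decorrelation} and the full-plane formula \eqref{eq:Dn-crit-explicit}. You also correctly identify the genuine technical core (uniform analysis near the branch points and exactness of the derivative form), which is precisely what is deferred to the cited paper.
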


\frenchspacing

\end{document}